\documentclass[12pt,letter]{article}
\usepackage{amssymb,amsfonts,mathrsfs,amsmath}
\usepackage{graphicx}
\usepackage{bbm}
\usepackage[comma,round]{natbib}
\usepackage{comment}
\usepackage{myart}
\usepackage[utf8]{inputenc}
\usepackage[T1]{fontenc}
\usepackage{makeidx}
\usepackage{caption}
\usepackage{array}
\usepackage{color,soul}
\allowdisplaybreaks

\newtheorem{lemma}{Lemma}
\newtheorem{proposition}{Proposition}
\newtheorem{corollary}{Corollary}

\newtheorem{definition}{Definition}

\begin{document}

\thispagestyle{empty}
\renewcommand{\thefootnote}{*}
\begin{center}
~\\
~\\
{\Large \sc Parallel Search for Information}\footnote[1]{We thank Andrej Zlatos for the helpful discussions regarding the proof of Proposition \ref{pr6}. We also thank Zuo-Jun (Max) Shen for helpful comments on an earlier version of this manuscript.}\\
\vspace{2.0in}
{\large \sc T. Tony Ke}\\
{\it Massachusetts Institute of Technology}\\
{\it kete@mit.edu}\\
\vspace{.3in}
{\large \sc Wenpin Tang}\\
{\it University of California, Los Angeles}\\
{\it wenpintang@math.ucla.edu}\\
\vspace{.3in}
{\large \sc J. Miguel Villas-Boas}\\
{\it University of California, Berkeley}\\
{\it villas@haas.berkeley.edu}\\
\vspace{.3in}
{\large \sc Yuming Zhang}\\
{\it University of California, Los Angeles}\\
{\it yzhangpaul@math.ucla.edu}\\
\vspace{1.5in}
{\large April 2020}
\end{center}

\newpage
\vspace{1.0in}
\begin{center}
{\large \sc Parallel Search for Information}
\vspace{.5in}

{\sc Abstract}
\end{center}
We consider the problem of a decision-maker searching for information on multiple alternatives when information is learned on all alternatives simultaneously. The decision-maker has a running cost of searching for information, and has to decide when to stop searching for information and choose one alternative. The expected payoff of each alternative evolves as a diffusion process when information is being learned.  We present necessary and sufficient conditions for the solution, establishing existence and uniqueness. We show that the optimal boundary where search is stopped (free boundary) is star-shaped, and present an asymptotic characterization of the value function and the free boundary. We show properties of how the distance between the free boundary and the diagonal varies with the number of alternatives, and how the free boundary under parallel search relates to the one under sequential search, with and without economies of scale on the search costs.

\vspace{.2in}
\noindent Keywords: \textit{Optimal Stopping, Free Boundary Problem, Search Theory, Brownian Motion}
\thispagestyle{empty}

\newpage
 \setcounter{page}{1}
 \setcounter{footnote}{0}
 \renewcommand{\thefootnote}{\arabic{footnote}}

\section{Introduction}

\quad In several situations a decision-maker (DM) has to decide how long to gain information on several alternatives simultaneously at a cost before stopping to make an adoption decision. An important aspect considered here, is that the DM gains information on all alternatives at the same time and cannot choose which alternative to gain information on---which we call \textit{parallel search}. This can be, for example, the case of a consumer trying to decide among several products in a product category and passively learning about the product category, or browsing through a web site that compares several products side by side. 

If all the alternatives have a relatively low expected payoff, the DM may decide to stop the search, and not choose any of the alternatives. If two or more alternatives have a similar and sufficiently high expected payoff, the DM may decide to continue to search for information until finding out which alternative may be the best. If the expected payoff of the best alternative is clearly higher than the expected payoff of the second best alternative, the DM may decide to stop the search process and choose the best alternative. We  characterize the solution to this problem, considering some comparative statics, and comparing it with the case in which there is sequential search for information.

The problem of the DM can be set up by a value function for the DM, which is the expected payoff for the DM going forward under the optimal policy.
We give necessary conditions for the value function in Section \ref{sc1}: it is a viscosity solution to some partial differential equation (PDE) with at most linear growth. We then show in Section \ref{sc2} that the condition derived is also sufficient by establishing the existence and uniqueness of the solution to the PDE. We obtain this result with unbounded domain, which is essential for our asymptotic results.\footnote{For similar results with a bounded domain, see, for example, \cite{PSbook}.}

 One important ingredient of the problem considered is that there is a free boundary where it is optimal to stop, and this boundary is determined by the solution to the PDE.
In Section \ref{sc3}, we show a geometric property of the free boundary: it is star-shaped with respect to the origin. Moreover, and interestingly, how much is required from the best alternative in order to stop the process is increasing in the values of the other alternatives. 

 Although it is not possible to derive closed-form expressions for the value function or the free boundary, we can study the asymptotics of the value function as well as the free boundary when the expected payoff of all alternatives is large, which is presented in Section \ref{sc4}.
We provide fine estimates of the distance from the free boundary to  the line when all alternatives have the same high expected payoff for the case of two alternatives, while for the case of more than two alternatives we show that this distance is increasing in the number of alternatives, and is at most linear in the number of alternatives.
To the best of our knowledge, this is one of the few results concerning the asymptotic geometry of the optimal stopping problem in dimension $d \ge 2$. See \cite{PSbook}, \cite{GZ10}, \cite{AJO14} for studies of optimal stopping problems for the case of two alternatives.
The main difficulty in our problem is lack of closed-form expressions for the value function. Here we rely heavily on the PDE machinery.

We also compare the stopping boundary with the boundary that results from the problem where alternatives can only be learnt sequentially---one alternative at each instance of time. If the cost of parallel search is just the cost of sequential search for one alternative times the number of alternatives (i.e., no economies of scale in the number of alternatives on which the DM is searching for information), then we can show that there is more search for information in the sequential search case, than in the parallel search case, as the sequential search for information can replicate parallel search. In this case, when the expected payoffs of the alternatives go to infinity (or equivalently, when the outside option is sufficiently low), we can also show that boundaries for search under sequential search converge to the boundaries for search under parallel search.

We also consider what happens if the DM can choose, at different costs, to gain information sequentially on one alternative at a time, or to gain information on all alternatives simultaneously, with the cost of parallel search being less than the cost of sequential search multiplied by the number of alternatives (i.e., economies of scale in search over multiple alternatives). This is an interesting case to consider as decision-makers may have a chance to get sometimes information on all alternatives at a lower cost via parallel search (for example, browsing a website that compares all alternatives, or reading a magazine with general information about the product category), but other times may choose to dive into getting information about a particular alternative via sequential search. We find in this case that, if the expected payoffs of the alternatives are high enough, then it is always optimal to do parallel search.

There is some literature on the case of learning about a single alternative in comparison to an outside option (e.g., \citealt{rob81}, \citealt{mos01}, \citealt{bra12}, \citealt{fud15a}).\footnote{The case of learning a single alternative was considered with discrete costly sequential sampling in \cite{wal45}.} 
 When there is more than one uncertain alternative the problem becomes more complex, as choosing one alternative means giving up potential high payoffs from other alternatives about which the decision maker could also learn more. This paper can then be seen as extending this literature to allow for more than one alternative, which requires the solution to a partial differential equation. Another possibility, considered in \cite{ke16},\footnote{\cite{che16} consider which type of information to collect in a Poisson-type model, when the decision maker has to choose between two alternatives, with one and only one alternative having a high payoff. See also \cite{nik18}, \cite{ke19}, and \cite{heb17}. For problems where the DM gets rewards while learning see, for example, \cite{ber96}, \cite{rad99}.} is that the DM can choose to search for information on one alternative at a time (with alternatives having independent values). That simplifies the analysis because in each region in which one alternative is searched, the value function satisfies an ordinary differential equation on the state of that alternative, keeping the states of the other alternatives fixed. Here, the value function does not satisfy that property as the states of all alternatives move simultaneously. Consequently, the value function is determined by a partial differential equation (with free boundaries) on the state of any alternative. We compare the solution in this case with the solution when the DM can choose to search for information on only one alternative at a time. We also consider what happens when the DM can choose to search for information on only one alternative, or search on all alternatives simultaneously at a higher cost, with economies of scale on the number of alternatives searched.

 The literature on financial options based on multiple assets (rainbow options) is also related to this paper (see, for example, \citealt{stu82}, \citealt{joh87}, \citealt{rub91}, \citealt{bro97}). In relation to that literature, we present a different specification related to consumer search for information and show existence of a unique solution.

The remainder of the paper is organized as follows. We present the problem in the next section, give necessary conditions for the solution, and show its existence. Section \ref{sc2} shows uniqueness of the solution, and Section \ref{sc3} shows that the optimal solution is star-shaped. Section \ref{sc4}  considers asymptotic results of the solution, compares it to the case of sequential search, and presents what happens when we can have both parallel and sequential search.

\section{The Problem and Necessary Conditions for the Solution}
\label{sc1}


\subsection{Decision-Maker Problem}

Consider a consumer, whose utility of product $i$, $U_i,$ is the sum of the utility derived from each attribute of the product. $U_i=x_i +\sum_{t=1}^T a_{it}$, where $x_i$ is the consumer's initial expected utility, and $a_{it}$ is the utility of attribute $t$ of product $i$, which is uncertain to the consumer before search.  It is also assumed that $a_{it}$ is i.i.d. across $t$ and $i$, and without loss of generality, $\mathbb{E}[a_{it}]=0$. There is an outside option which is worth	 zero.

Each time by paying a search cost $c$, the consumer checks one attribute $a_{it}$ for all products $i=1,\ldots,d$. The consumer decides when to stop searching and upon stopping which product to buy so as to maximize the expected utility. After checking $t$ attributes, the consumer's conditional expected utility of product $i$ is,
$$
	X_i(t) = \mathbb{E}_t[U_i] = x_i+\sum_{s=1}^t a_{is}.
$$
Therefore, $X_i(t)$ is a random walk, which converges to the Brownian motion $B_i^{x_i}(t)$, when we scale $a_{is}$ and the search cost $c$ proportionally to infinitesimally small and take $T$ to infinity. The problem of the consumer is to decide when to stop the process, and then choose the best alternative.

An alternative formulation of this problem has Bayesian learning with an evolving state.
Suppose that the true value of the alternatives, $\widehat{X}(t),$ follows the process $d\widehat{X}(t)= \sigma \; d B(t)$ where $\sigma$ is a diagonal matrix, with general element $\sigma_{ii}$ in the diagonal, and that the signal of $\widehat{X}(t), S(t),$ a $d$-dimension vector, follows $dS(t) = \widehat{X}(t) \; dt + y \; d \widetilde{B}(t),$ with $\widetilde{B}(t)$ being a $d$-dimensional Brownian motion independent of $B(t),$ $y$ is a diagonal matrix, with general element in the diagonal $y_{ii}.$ Suppose also that the prior of $\widehat{X}(0)$ is a normal with mean $\widehat{X}(0)$ and variance-covariance $\widehat{\rho}(0)^2,$ with $\widehat{\rho}(0)$ being a diagonal matrix, with general element in the diagonal $\widehat{\rho}_{ii}(0).$ Then, the posterior mean of $\widehat{X}(t),$ $X(t),$ follows $X_i(t) = \widehat{\rho}_{ii}(t)/y_{ii}^2\cdot d\overline{B}_i(t)$, for all $i,$ with $\overline{B}(t)$ being a $d$-dimensional Brownian motion, and $d \widehat{\rho}_{ii}(t)/dt = \sigma_{ii}^2 - \widehat{\rho}(t)^2/y_{ii}^2$ for all $i$. So, we have $\widehat{\rho}(t) \rightarrow \sigma y$ as $t \rightarrow \infty.$ Then, if $\widehat{\rho}(0) = \sigma y,$ we have that $X(t)$ is stationary, $d X_{ii}(t) = \sigma_{ii}/y_{ii}\cdot d  \overline{B}_i(t)$ for all $i$ and the analysis that follows would be done on the process $X(t).$

In both formulations of the problem, we can let the expected payoffs of the $d$ alternatives at time $t$ be $B^x = (B_1^{x_1}(t), \ldots, B_d^{x_d}(t))_{t \ge 0}$, a $d$-dimensional Brownian motion starting at $x = (x_1, \ldots, x_d)$. Each component of this Brownian motion could be the value of the alternative if the process is stopped. In the consumer learning application, this would be the expected value of that product at the time when the consumer makes the purchase decision. In a financial option application, this would be the value of the asset when the option is exercised.
Let $\mathcal{T}$ be a suitable set of stopping times with respect to the natural filtration of $B^x$.
We aim to determine the following value function,
\begin{equation}
\label{eq:value}
u(x): = \sup_{\tau \in \mathcal{T}} \mathbb{E}\left[\max \left\{ B^{x_1}_1(\tau), \ldots,B^{x_d}_d(\tau), 0 \right\}- c\tau \right],
\end{equation}
where $c > 0$ is the cost per unit time.\footnote{The problem could also be considered with time discounting. The case of the cost per unit of time could be seen as the costs of processing information when learning about different alternatives.}

\subsection{General Framework}

\quad We start with the general framework of the optimal stopping problem (\ref{eq:value}).
Let $\Omega \subset \mathbb{R}^d$ be an open domain.
Consider the following stochastic differential equation (SDE):
\begin{equation}
\label{eq:SDE}
dX^x(t) = b(X^x(t)) dt + \sigma(X^x(t)) dB(t),  
\end{equation}
where the superscript $x$ denotes $X^x(0) = x \in \Omega$.
Here $(B(t); \, t \ge 0)$ is a $d$-dimensional Brownian motion starting at $0$, $b: \mathbb{R}^n \rightarrow \mathbb{R}^n$ and $\sigma : \mathbb{R}^n \rightarrow \mathbb{R}^{n \times n}$ satisfy

\begin{itemize}
\item
{\em Lipschitz condition}: there exists $C > 0$ such that
$$|b(x) - b(y)| + |\sigma(x) - \sigma(y)| \le C |x - y|.$$
\item
{\em Linear growth condition}: there exists $K > 0$ such that
$$|b(x)| + |\sigma(x) | \le K |x|.$$
\end{itemize}

It is well known that under these conditions, the SDE (\ref{eq:SDE}) has a strong solution which is pathwise unique.
See, for example, \cite{KS}, Section 5.2, for background on strong solutions to SDEs. The vector $X(t)$ has as each element $i$ the expected utility obtained if the DM were to decide to stop the search process at time $t$ and choose alternative $i.$

 Let
\begin{equation}
J_x(\tau): = \mathbb{E} \left[\int_0^{\tau} f(X^x(s))ds + g(X^x(\tau)) \right],
\end{equation}
where $\tau$ is a stopping time, and $f$, $g$ are two continuous functions with polynomial growth, or simply Lipschitz continuous functions.
We are interested in the value function
\begin{equation}
\label{eq:valuegeneral}
u(x) = \sup_{\tau \in \mathcal{T}} J_x(\tau),
\end{equation}
where $\mathcal{T}$ is a suitable set of stopping times.
Let $\mathcal{L}$ be the infinitesimal generator of the SDE (\ref{eq:SDE}).
That is,
\begin{equation*}
\mathcal{L}h = \sum_{i = 1}^d b_i \frac{\partial h}{\partial x_i} + \frac{1}{2} \sum_{i, j = 1}^d (\sigma \sigma^T)_{ij} \frac{\partial^2 h}{\partial x_i \partial x_j},
\end{equation*}
for any suitably smooth test function $h: \mathbb{R}^n \rightarrow \mathbb{R}$.

 A standard dynamic programming argument shows that $u$ is a viscosity solution to the following partial differential equation (PDE):
\begin{equation}
\label{eq:PDE}
\min(- \mathcal{L}u - f, u -g) = 0.
\end{equation}
We state the definition of viscosity solutions (and the associated definitions of subsolutions and supersolutions) in the Appendix and we also refer readers to \cite{crandall1983viscosity, ishii1987perron,ishii1989uniqueness} and \cite{CIL} for this notion.

Equation (\ref{eq:PDE}) is known as an {\em obstacle problem}, or a {\em variational inequality} (see \citealt{frehse1972regularity,kinderlehrer1980introduction}).
It exhibits two regimes:
\begin{itemize}
\item
$- \mathcal{L} u = f$ when $u > g$,
\item
$- \mathcal{L} u \ge  f$ when $u = g$.
\end{itemize}
The set $\{x\,|\,u(x) = g(x)\}$ is called the {\em contact set}, or {\em coincidence set}.
In general, a solution $u$ to (\ref{eq:PDE}) is of class $\mathcal{C}^1$ but not $\mathcal{C}^2$,
and the regularity depends on those of $f$, $g$.
We refer to \cite{caffarelli1998obstacle} for details.\footnote{See also \cite{str15}.}

Furthermore, let $g(x)$ have at most linear growth, that is, $g(x)  \le a \sum_{i=1}^d |x_i|,$ for some $a>0,$ which is a condition satisfied by $g(x)= \max \{x_1, ..., x_d,0\},$ which is the function $g$ in our application. Let also $f(x)$ be bounded from above by a negative number, which is also satisfied in our application.  Considering the optimal stopping problem \eqref{eq:valuegeneral}, we can then obtain Lemma \ref{lem:VI+bd0}, presented in the Appendix,  charactering the value function $u$ for this general case.

\subsection{Necessary Conditions for the Optimal Strategy}

Specializing to the optimal stopping problem (\ref{eq:value}), which is the focus of the analysis in the next sections, we have
\begin{equation}
\label{eq:Lfg0}
f(x_1,\ldots, x_d) = -c  \quad \mbox{and} \quad  g(x_1,\ldots, x_d) = \max\{x_1, \ldots, x_d, 0\}.
\end{equation}

We can then get the following corollary.\footnote{In terms of the SDE (\ref{eq:SDE}) this is the case when $b=0,$ and $\sigma = I$ where $I$  is the identity matrix. Several of the results in the next section can also be obtained for the general SDE (\ref{eq:SDE}) under some conditions. This is a standard technical issue that is not central to the results presented here, and therefore not considered for ease of presentation.}
\begin{corollary}
\label{lem:VI+bd}
Let $u$ be the value function defined by (\ref{eq:value}), with $\mathcal{T}: = \{\tau \mbox{ is a stopping time}: \mathbb{E} \tau < \infty\}$.
Then $u$ is a viscosity solution to
\begin{equation}
\label{eq:VIR}
 \min \left\{-\frac{1}{2} \Delta u + c, u - g \right\} = 0,
\end{equation}
where $\Delta$ is the Laplacian operator, $\sum_{i=1}^d \partial^2/\partial x_i^2$.
Moreover, we have for some $C >0$,
\begin{equation}
\label{eq:boundmgle}
g \le u \le \sum_{i = 1}^d |x_i| + C.
\end{equation}
\end{corollary}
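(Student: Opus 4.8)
The plan is to obtain Corollary \ref{lem:VI+bd} as a direct specialization of the general characterization Lemma \ref{lem:VI+bd0}, after checking that the data of (\ref{eq:value}) satisfy that lemma's hypotheses. Taking $b \equiv 0$ and $\sigma \equiv I$ in (\ref{eq:SDE}) gives $X^x(t) = x + B(t)$, i.e. the $d$-dimensional Brownian motion $B^x$ started at $x$; the coefficients are constant, hence trivially Lipschitz with linear growth, and the generator is $\mathcal{L} = \tfrac12 \Delta$. With the choices (\ref{eq:Lfg0}), the running payoff $f \equiv -c$ is bounded above by the negative constant $-c$, while $g(x) = \max\{x_1,\dots,x_d,0\}$ satisfies $0 \le g(x) \le \sum_{i=1}^d |x_i|$, so $g$ has at most linear growth. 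Substituting $\mathcal{L} = \tfrac12\Delta$ and $f \equiv -c$ into (\ref{eq:PDE}) turns it into (\ref{eq:VIR}). Hence Lemma \ref{lem:VI+bd0} applies and yields both that $u$ is a viscosity solution of (\ref{eq:VIR}) and the two-sided bound (\ref{eq:boundmgle}), with the natural choice of admissible stopping times being exactly $\mathcal{T} = \{\tau : \mathbb{E}\tau < \infty\}$.

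For completeness I would also spell out the short self-contained argument behind (\ref{eq:boundmgle}), as it is precisely the estimate that makes this choice of $\mathcal{T}$ natural. The lower bound is immediate: the constant stopping time $\tau \equiv 0$ lies in $\mathcal{T}$ and gives $\mathbb{E}[\max\{B^{x_1}_1(0),\dots,B^{x_d}_d(0),0\}] = g(x)$, so $u(x) \ge g(x)$. For the upper bound, fix $\tau \in \mathcal{T}$, so $\mathbb{E}\tau < \infty$. Using $\max\{B^{x_1}_1(\tau),\dots,B^{x_d}_d(\tau),0\} \le \sum_{i=1}^d |B^{x_i}_i(\tau)|$, the $L^2$ optional stopping identity $\mathbb{E}[(B^{x_i}_i(\tau) - x_i)^2] = \mathbb{E}\tau$ (valid because $\mathbb{E}\tau < \infty$), and the Cauchy--Schwarz inequality, one gets $\mathbb{E}|B^{x_i}_i(\tau)| \le |x_i| + \sqrt{\mathbb{E}\tau}$, whence
\begin{equation*}
\mathbb{E}\Big[\max\{B^{x_1}_1(\tau),\dots,B^{x_d}_d(\tau),0\} - c\tau\Big] \le \sum_{i=1}^d |x_i| + d\sqrt{\mathbb{E}\tau} - c\,\mathbb{E}\tau \le \sum_{i=1}^d |x_i| + \sup_{s \ge 0}\big(d\sqrt{s} - cs\big).
\end{equation*}
Since $\sup_{s\ge 0}(d\sqrt{s} - cs) = d^2/(4c)$, taking the supremum over $\tau \in \mathcal{T}$ gives $u(x) \le \sum_{i=1}^d |x_i| + d^2/(4c)$, i.e. (\ref{eq:boundmgle}) with $C = d^2/(4c)$.

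Finally, I would recall why $u$ solves (\ref{eq:VIR}) in the viscosity sense; this is the content of the dynamic programming argument already invoked for the general PDE (\ref{eq:PDE}). The key step is the dynamic programming principle: for every stopping time $\theta$,
\begin{equation*}
u(x) = \sup_{\tau \in \mathcal{T}} \mathbb{E}\Big[ -c\,(\tau \wedge \theta) + g(B^x(\tau))\,1_{\{\tau \le \theta\}} + u(B^x(\theta))\,1_{\{\tau > \theta\}} \Big];
\end{equation*}
choosing $\theta$ to be the exit time of a small ball around $x$ and testing against smooth functions that touch $u$ from above (resp.\ below) gives, in the usual way, the supersolution inequality $\min\{-\tfrac12\Delta u + c,\, u - g\} \ge 0$ (resp.\ the subsolution inequality $\le 0$). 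I expect the one genuinely delicate point --- already subsumed in Lemma \ref{lem:VI+bd0} --- to be the rigorous justification of this principle on the unbounded domain $\mathbb{R}^d$: because the supremum runs over an infinite family of stopping times and $u$ is a priori known to be finite only through the linear-growth bound just established, one needs a measurable-selection argument (concatenation of $\varepsilon$-optimal stopping times) together with the uniform integrability of $\{-c(\tau\wedge\theta_n) + g(B^x(\theta_n))\}$ that the bound (\ref{eq:boundmgle}) supplies. Once the principle is in hand, the viscosity inequalities, and hence the corollary, follow by the standard comparison of $u$ with smooth test functions.
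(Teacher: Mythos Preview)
Your proposal is correct and follows essentially the same route as the paper: the corollary is obtained by specializing Lemma~\ref{lem:VI+bd0} to $b\equiv 0$, $\sigma\equiv I$, $f\equiv -c$, $g(x)=\max\{x_1,\dots,x_d,0\}$, and the paper's own proof of the linear upper bound proceeds via the same decomposition $g(X^x(\tau))\le\sum_i|X_i^{x_i}(\tau)|$ followed by optimization over $\mathbb{E}\tau$. Your direct use of Wald's second identity in place of the Burkholder--Davis--Gundy inequality is a minor simplification appropriate to the Brownian case and yields the sharper constant $C=d^2/(4c)$ rather than the $L^2 d^4/(4c)$ produced by the general lemma.
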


Corollary \ref{lem:VI+bd} asserts that the value function $u$ satisfies the PDE (\ref{eq:VIR}), with at most linear growth.
We will show in the next Section that such a solution is unique.
Once the value function $u$ is determined, then we construct an optimal strategy $\tau^{*}$ by
\begin{equation}
\label{eq:optimal}
J_x(\tau^{*} ) = u(x).
\end{equation}
More precisely, starting at a position $x \in \{u > g\}$, the search will continue until it enters the contact set:
\begin{equation}
    \tau^{*} = \inf\{t > 0: B^x \in \{ u = g\}\}.
\end{equation}

\section{Uniqueness}
\label{sc2}

\quad In this section, we prove that there exists a unique viscosity solution to the PDE (\ref{eq:VIR}).
In the sequel, let $\mathcal{B}_R$ be the ball of radius $R > 0$. Suppose $O\subseteq\mathbb{R}^d$ is open, and we write $\partial O$ as its boundary and $\overline{O} := O \cup \partial O$.
We first prove a comparison principle in bounded domains.

\begin{lemma}[Comparison principle in $\mathcal{B}_R$]
\label{lem:CP}
Assume that $u_1$ is a supersolution to (\ref{eq:VIR}), and $u_2$ is a subsolution to (\ref{eq:VIR}).
If $u_1 \ge u_2$ on $\partial \mathcal{B}_R$ for some $R>0$, then $u_1 \ge u_2$ in $\mathcal{B}_R$.
\end{lemma}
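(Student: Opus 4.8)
The plan is to use the standard doubling-of-variables technique from viscosity solution theory, adapted to the variational-inequality structure of (\ref{eq:VIR}). Suppose for contradiction that $M := \sup_{\overline{\mathcal{B}_R}} (u_2 - u_1) > 0$. Since $u_1 \ge u_2$ on $\partial \mathcal{B}_R$ and both functions are continuous (viscosity sub/supersolutions here being taken continuous up to the boundary), this positive supremum is attained at some interior point. For $\varepsilon > 0$ introduce the penalized functional
\begin{equation*}
\Phi_\varepsilon(x,y) = u_2(x) - u_1(y) - \frac{1}{2\varepsilon}|x-y|^2,
\end{equation*}
and let $(x_\varepsilon, y_\varepsilon)$ be a maximizer over $\overline{\mathcal{B}_R} \times \overline{\mathcal{B}_R}$. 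The usual lemma (see \citealt{CIL}, Lemma 3.1) gives $|x_\varepsilon - y_\varepsilon|^2/\varepsilon \to 0$ and $\Phi_\varepsilon(x_\varepsilon,y_\varepsilon) \to M$ as $\varepsilon \to 0$, and moreover $x_\varepsilon, y_\varepsilon$ converge (along a subsequence) to a common interior maximum point $\bar x$ of $u_2 - u_1$; in particular $x_\varepsilon, y_\varepsilon$ lie in the interior of $\mathcal{B}_R$ for $\varepsilon$ small.

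Next I would apply the Crandall--Ishii lemma (the "theorem on sums") to produce symmetric matrices $P, Q$ with $P \le Q$ such that, writing $p_\varepsilon = (x_\varepsilon - y_\varepsilon)/\varepsilon$,
\begin{equation*}
(p_\varepsilon, P) \in \overline{J}^{2,+} u_2(x_\varepsilon), \qquad (p_\varepsilon, Q) \in \overline{J}^{2,-} u_1(y_\varepsilon).
\end{equation*}
Since $u_2$ is a subsolution, $\min\{-\tfrac12 \mathrm{tr}(P) + c,\, u_2(x_\varepsilon) - g(x_\varepsilon)\} \le 0$; since $u_1$ is a supersolution, $\min\{-\tfrac12 \mathrm{tr}(Q) + c,\, u_1(y_\varepsilon) - g(y_\varepsilon)\} \ge 0$. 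From the supersolution inequality, $u_1(y_\varepsilon) - g(y_\varepsilon) \ge 0$ and $-\tfrac12 \mathrm{tr}(Q) + c \ge 0$. Now consider the two cases for the subsolution inequality. If $u_2(x_\varepsilon) - g(x_\varepsilon) \le 0$, then combining with $u_1(y_\varepsilon) \ge g(y_\varepsilon)$ and using continuity of $g$ (indeed $g$ is Lipschitz) we get $u_2(x_\varepsilon) - u_1(y_\varepsilon) \le g(x_\varepsilon) - g(y_\varepsilon) \to 0$, forcing $M \le 0$, a contradiction. If instead $-\tfrac12 \mathrm{tr}(P) + c \le 0$, then $-\tfrac12 \mathrm{tr}(P) \le -c < 0 \le -\tfrac12 \mathrm{tr}(Q) + c - c = -\tfrac12\mathrm{tr}(Q)$, wait — more carefully: $\tfrac12\mathrm{tr}(P) \ge c$ and $\tfrac12 \mathrm{tr}(Q) \le c$, so $\mathrm{tr}(P) \ge \mathrm{tr}(Q)$. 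But $P \le Q$ implies $\mathrm{tr}(P) \le \mathrm{tr}(Q)$, hence $\mathrm{tr}(P) = \mathrm{tr}(Q) = 2c$. This is the borderline situation; the strict contradiction is recovered because the operator has no zeroth-order dependence, so I would instead handle it by a standard perturbation: replace $u_2$ by $u_2^\delta(x) := u_2(x) - \delta(2R^2 - |x|^2)$ for small $\delta > 0$, which is a strict subsolution (it picks up $-\tfrac12\Delta(\cdot)$ contributing $+d\delta$, making the first branch strictly negative), run the same argument, and let $\delta \to 0$ at the end.

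The main obstacle is precisely the degeneracy just flagged: because (\ref{eq:VIR}) is first-order-free (pure Laplacian plus constant, no $u$ term on its own with favorable sign in the "$-\mathcal{L}u - f$" branch), the naive doubling argument only yields $\mathrm{tr}(P) \le \mathrm{tr}(Q)$ versus $\mathrm{tr}(P) \ge \mathrm{tr}(Q)$, i.e. equality rather than a strict contradiction, so one genuinely needs the strict-subsolution perturbation $u_2 - \delta(2R^2 - |x|^2)$ (note this perturbation vanishes nowhere needed and is $\ge 0$ on $\overline{\mathcal{B}_R}$ so it does not spoil the boundary ordering in the limit, and one must check it keeps the maximizers interior). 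Everything else — attainment of the sup, the properties of $(x_\varepsilon,y_\varepsilon)$, the Crandall--Ishii lemma, and the case analysis on the two branches of the $\min$ — is routine. Letting $\varepsilon \to 0$ then $\delta \to 0$ yields $M \le 0$, i.e. $u_1 \ge u_2$ in $\mathcal{B}_R$, as claimed.
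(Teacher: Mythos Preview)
Your argument is correct (modulo a sign slip in the parenthetical: the perturbation contributes $-d\delta$, not $+d\delta$, to $-\tfrac{1}{2}\Delta u_2^\delta + c$, but your conclusion that the first branch becomes strictly negative is right). It is, however, a genuinely different route from the paper's.

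The paper sidesteps the doubling-of-variables machinery entirely by exploiting the obstacle structure. It sets $O := \mathcal{B}_R \cap \{u_2 > g\}$. Since a supersolution satisfies $u_1 \ge g$ everywhere, on $\overline{\mathcal{B}_R} \setminus O$ one has $u_1 \ge g \ge u_2$ automatically; on $\partial O$ this gives $u_1 \ge u_2$ (combining the hypothesis on $\partial\mathcal{B}_R$ with $u_2 = g \le u_1$ on the remainder). Inside $O$ the obstacle branch is inactive for $u_2$ by definition, so $-\tfrac{1}{2}\Delta u_2 + c \le 0$ there, while $-\tfrac{1}{2}\Delta u_1 + c \ge 0$ holds everywhere for the supersolution. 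One then invokes the standard comparison principle for the Poisson equation (Theorem~3.3 in \cite{CIL}) on the bounded set $O$. Three lines.

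What your approach buys is self-containment: you reprove comparison from first principles via Crandall--Ishii plus a strict-subsolution perturbation to break the trace degeneracy $\mathrm{tr}\,P = \mathrm{tr}\,Q$, and the same template would handle $\min\{F(D^2u),\,u-g\}=0$ for any degenerate-elliptic $F$. What the paper's approach buys is brevity and a clean conceptual reduction: by restricting attention to the set where the obstacle constraint is inactive for the subsolution, the variational inequality collapses to a pure second-order PDE and the work is outsourced to a cited black box. Both approaches are standard in the obstacle-problem literature; the paper's is the more economical choice here.
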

\begin{proof}
Let us consider the domain $O:=\mathcal{B}_R\cap \{u_2>g\}$. Since $u_1$ is a supersolution, $u_1(x) \ge g(x)$ for all $x \in \mathcal{B}_R$ and therefore $u_1\geq u_2$ on $\overline{\mathcal{B}}_R\backslash O$. Then apply the comparison principle (Theorem 3.3 \cite{CIL}) in $O$, we have $u_1\geq u_2$ also on $ O$ which completes the proof.
\end{proof}

Finally, we consider uniqueness and show that among continuous functions that have less than quadratic growth at infinity, the solution $u$ obtained is unique.
\begin{lemma}[Comparison principle]
\label{comparison}
Let $u_1, u_2$ be respectively a subsolution and a supersolution to (\ref{eq:VIR}) in an open subset $\Omega$ of $ \mathbb{R}^d$. Suppose there is a continuous function $h:\mathbb{R}^+\to\mathbb{R}^+$ such that
\begin{equation}
    \label{cond: less2g}
    \begin{aligned}
        &\lim_{R\to\infty}h(R)=0,\quad \text{ and for all }R\geq 1\\
        &\limsup_{|x|\geq R,x\in\Omega}\frac{\max\{u_1(x),0\}+\max\{-u_2(x),0\}}{|x|^2}\leq h(R).
    \end{aligned}
\end{equation}
Suppose $u_2\geq u_1 $ on $\partial\Omega$ (note $\partial\Omega=\emptyset$ if $\Omega=\mathbb{R}^d$).
Then we have $u_2\geq u_1$ in $\Omega$.
\end{lemma}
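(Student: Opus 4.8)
The plan is to upgrade the bounded-domain comparison principle (Lemma~\ref{lem:CP}) to the unbounded setting by the classical device of subtracting a carefully chosen radial penalization that dominates the growth of $u_1-u_2$ at infinity, and then sending the penalization to zero. Fix $\varepsilon>0$ and consider the perturbed function $u_1^\varepsilon(x):=u_1(x)-\varepsilon\varphi(x)$ for a smooth radial function $\varphi$ growing like $|x|^2$ (say $\varphi(x)=|x|^2+1$, or a slightly superlinear-but-subquadratic $|x|^2\log(2+|x|)$ if one needs strictness). The first step is to verify that for the right choice of $\varphi$, $u_1^\varepsilon$ is still a subsolution of (\ref{eq:VIR}) — or at least a subsolution of a slightly perturbed inequality — on $\Omega$: in the viscosity sense one checks that $-\tfrac12\Delta\varphi$ has a sign that pushes $-\tfrac12\Delta u_1^\varepsilon+c$ downward, and that $u_1^\varepsilon-g\le u_1-g$, so the $\min$ in (\ref{eq:VIR}) stays $\le 0$. (Because $g$ has only linear growth, subtracting $\varepsilon\varphi$ eventually makes $u_1^\varepsilon<g$ far out, which is fine — we only need the subsolution property on the region where the comparison is nontrivial.)

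The second step exploits the growth hypothesis (\ref{cond: less2g}): since $\max\{u_1,0\}+\max\{-u_2,0\}\le h(R)|x|^2$ for $|x|\ge R$ with $h(R)\to 0$, we have $u_1(x)-u_2(x)\le h(R)|x|^2$ on $\{|x|\ge R\}\cap\Omega$, hence
\[
u_1^\varepsilon(x)-u_2(x)\;\le\; h(R)|x|^2-\varepsilon\varphi(x)\;\le\; \bigl(h(R)-\varepsilon\bigr)|x|^2+\varepsilon .
\]
Choosing $R=R(\varepsilon)$ large enough that $h(R)<\varepsilon$, the right-hand side is negative once $|x|$ exceeds some radius $\rho(\varepsilon)$. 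Combined with the boundary hypothesis $u_2\ge u_1\ge u_1^\varepsilon$ on $\partial\Omega$, this shows $u_1^\varepsilon\le u_2$ on $\partial\bigl(\Omega\cap\mathcal{B}_{\rho(\varepsilon)}\bigr)$ — both on the genuine boundary $\partial\Omega$ and on the artificial sphere $\partial\mathcal{B}_{\rho(\varepsilon)}$.

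The third step is to apply the bounded-domain comparison principle (Lemma~\ref{lem:CP}, together with Theorem~3.3 of \cite{CIL} on the region where $u_2>g$, exactly as in that lemma's proof) on the bounded open set $\Omega\cap\mathcal{B}_{\rho(\varepsilon)}$ to conclude $u_1^\varepsilon\le u_2$ throughout $\Omega\cap\mathcal{B}_{\rho(\varepsilon)}$, i.e. $u_1(x)-u_2(x)\le\varepsilon\varphi(x)$ there. Since every fixed $x\in\Omega$ lies in $\mathcal{B}_{\rho(\varepsilon)}$ for all small $\varepsilon$ (as $\rho(\varepsilon)\to\infty$), letting $\varepsilon\downarrow 0$ gives $u_1(x)\le u_2(x)$ for every $x\in\Omega$, which is the claim.

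The main obstacle I expect is Step~1: making the perturbation argument work cleanly at the borderline quadratic growth rate. A plain $\varepsilon|x|^2$ penalization has $-\tfrac12\Delta(\varepsilon|x|^2)=-\varepsilon d<0$, which helps the subsolution inequality $-\tfrac12\Delta u_1^\varepsilon+c\le 0$ in the right direction, so that part is benign; the delicate point is that the hypothesis (\ref{cond: less2g}) only gives growth \emph{at most} quadratic, with the constant $h(R)$ possibly not summable, so one must be careful that $h(R(\varepsilon))<\varepsilon$ is achievable — it is, precisely because $h(R)\to 0$, but this forces $R(\varepsilon),\rho(\varepsilon)\to\infty$ and one should double-check that the artificial sphere is pushed to infinity uniformly in the right way. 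A secondary technical nuisance is handling the $\min$-structure in (\ref{eq:VIR}) under perturbation and the interaction with the obstacle $g$ on $\partial\Omega$; this is dispatched exactly as in Lemma~\ref{lem:CP} by restricting attention to $O=\{u_2>g\}$ where the equation is the linear PDE $-\tfrac12\Delta u+c=0$ and the standard viscosity comparison theorem applies directly.
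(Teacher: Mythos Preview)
Your overall architecture --- penalize by a quadratic, get ordering on a large sphere from the sub-quadratic growth hypothesis, apply the bounded-domain comparison, send the penalization to zero --- is exactly the right one, and it is what the paper does. But Step~1 contains a sign error that breaks the argument as written.

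You set $u_1^\varepsilon=u_1-\varepsilon\varphi$ with $\varphi(x)=|x|^2+1$ and claim this remains a subsolution because ``$-\tfrac12\Delta(\varepsilon|x|^2)=-\varepsilon d<0$ helps the subsolution inequality in the right direction.'' But $u_1^\varepsilon=u_1-\varepsilon\varphi$ gives
\[
-\tfrac12\Delta u_1^\varepsilon \;=\; -\tfrac12\Delta u_1 \;+\; \tfrac{\varepsilon}{2}\Delta\varphi \;=\; -\tfrac12\Delta u_1 + \varepsilon d,
\]
so the correction is $+\varepsilon d$, not $-\varepsilon d$. On $\{u_1^\varepsilon>g\}\subseteq\{u_1>g\}$ you therefore only obtain $-\tfrac12\Delta u_1^\varepsilon+c\le \varepsilon d$, not $\le 0$; hence $u_1^\varepsilon$ is merely a subsolution of the equation with $c$ replaced by $c-\varepsilon d$, and $u_2$ is \emph{not} a supersolution of that perturbed equation, so Lemma~\ref{lem:CP} no longer applies. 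This is not a technicality one can massage away: subtracting any convex, super-linearly growing barrier from $u_1$ produces the same obstruction, and there is no concave function with quadratic growth to subtract instead.

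The paper repairs this by perturbing the \emph{supersolution} rather than the subsolution, and --- this is the key device your proposal lacks --- coupling the added quadratic with a multiplicative scaling:
\[
u_2^\epsilon \;:=\; (1-d\epsilon)\,u_2 \;+\; c\epsilon\Bigl(\textstyle\sum_i x_i^2+\dfrac{d^3}{4c^2}\Bigr).
\]
The factor $(1-d\epsilon)$ buys exactly the Laplacian budget needed to absorb the quadratic: $\Delta u_2^\epsilon \le (1-d\epsilon)\cdot 2c + 2cd\epsilon = 2c$, so $u_2^\epsilon$ remains a supersolution (the obstacle constraint $u_2^\epsilon\ge g$ follows from $cx_i^2+d^2/(4c)\ge d|x_i|$). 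One then uses~(\ref{cond: less2g}) to get $u_1\le u_2^\epsilon$ on $\partial\mathcal{B}_R$ for $R$ large, applies Lemma~\ref{lem:CP} on $\mathcal{B}_R\cap\Omega$, and sends $\epsilon\to 0$ --- your Steps~2 and~3 essentially verbatim. The missing ingredient in your proposal is precisely this $(1-d\epsilon)$ scaling.
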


We prove this Lemma in the Appendix. With this Lemma, we are able to compare sub and supersolutions in $\mathbb{R}^d$ as long as condition (\ref{cond: less2g}) is satisfied.
Combined with Lemma \ref{lem:VI+bd}, we get a complete characterization of the value function $u,$ which is presented in the next proposition. A significant new result is that this characterization is obtained for unbounded domains.

\begin{proposition}
Let $u$ be the value function defined by (\ref{eq:value}), with $\mathcal{T}: = \{\tau \mbox{ is a stopping time}: \mathbb{E} \tau < \infty\}$.
Then $u$ is the unique viscosity solution to (\ref{eq:VIR}) with at most linear growth.
\end{proposition}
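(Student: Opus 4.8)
The plan is to assemble the two results already in hand: the characterization of the value function in Corollary~\ref{lem:VI+bd} (equivalently Lemma~\ref{lem:VI+bd0}) and the comparison principle in Lemma~\ref{comparison}. Existence is immediate: Corollary~\ref{lem:VI+bd} says the value function $u$ defined by \eqref{eq:value} is a viscosity solution of \eqref{eq:VIR} and, by \eqref{eq:boundmgle}, satisfies $g\le u\le\sum_{i=1}^d|x_i|+C$; in particular $u$ is a viscosity solution of at most linear growth. So the only thing to prove is uniqueness within this class.

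To that end I would take an arbitrary viscosity solution $v$ of \eqref{eq:VIR} on $\mathbb{R}^d$ with at most linear growth, say $|v(x)|\le A|x|+B$ for all $x$, and recall $0\le g\le u\le\sqrt d\,|x|+C$. First I would check that at-most-linear growth implies the subquadratic hypothesis \eqref{cond: less2g} of Lemma~\ref{comparison} with $\Omega=\mathbb{R}^d$. Indeed, for $|x|\ge R\ge 1$ one has $\max\{u(x),0\}=u(x)\le\sqrt d\,|x|+C$ and $\max\{-v(x),0\}\le|v(x)|\le A|x|+B$, hence
\[
\frac{\max\{u(x),0\}+\max\{-v(x),0\}}{|x|^2}\le\frac{(\sqrt d+A)|x|+(B+C)}{|x|^2}\le\frac{\sqrt d+A}{R}+\frac{B+C}{R^2}=:h(R),
\]
with $h(R)\to 0$ as $R\to\infty$; the same bound holds verbatim with the roles of $u$ and $v$ exchanged (and is even easier, since $\max\{-u,0\}=0$). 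Because $\Omega=\mathbb{R}^d$ we have $\partial\Omega=\emptyset$, so the boundary condition in Lemma~\ref{comparison} is vacuous.

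Now I would invoke Lemma~\ref{comparison} twice, using the elementary but essential fact that a viscosity solution is simultaneously a subsolution and a supersolution. Taking $u_1=v$ (subsolution) and $u_2=u$ (supersolution) yields $u\ge v$ on $\mathbb{R}^d$; taking $u_1=u$ (subsolution) and $u_2=v$ (supersolution) yields $v\ge u$ on $\mathbb{R}^d$. Hence $u\equiv v$, which is the asserted uniqueness, and together with the existence statement above this identifies $u$ as the unique viscosity solution of \eqref{eq:VIR} with at most linear growth.

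The substantive work is entirely inside the two lemmas being cited, so the only point requiring care in this assembly is the compatibility of the growth conditions: one must verify that the at-most-linear growth used to define the uniqueness class genuinely falls within the subquadratic regime \eqref{cond: less2g} under which Lemma~\ref{comparison} applies, and must be careful to run the comparison in both directions by exploiting that a solution is both a sub- and a supersolution. I do not anticipate any further obstacle.
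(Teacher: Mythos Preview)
Your proposal is correct and follows essentially the same approach as the paper: existence with linear growth from Corollary~\ref{lem:VI+bd}, then uniqueness by applying the comparison principle of Lemma~\ref{comparison} on $\Omega=\mathbb{R}^d$ (the paper's proof is simply a two-sentence version of this). You have merely made explicit the verification of \eqref{cond: less2g} and the two-directional application of the comparison, which the paper leaves implicit.
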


\begin{proof}
By Corollary \ref{lem:VI+bd}, $u$ is a solution to (\ref{eq:value}) with linear growth at infinity. By the comparison principle we know this $u$ is the unique viscosity solution to (\ref{eq:VIR}) among all continuous functions satisfying
$\lim_{| x|\to\infty}{|u(x)|}/{| x|^2}=0. $
\end{proof}

\section{Star-shapedness of the Free Boundary}
\label{sc3}

\quad Let $u$ be a solution of (\ref{eq:VIR}). 
The {\em free boundary} of $u$ is defined as the interface of the sets $\{x\,|\,u(x)>g(x)\}$ and $\{x\,|\,u(x)=g(x)\}$ which we denote by $\Gamma(u)$. 
Several regularity results of $\Gamma(u)$ can be found in \cite{caffarelli1998obstacle}.
In this paper, we are interested in the global geometric property of $\Gamma(u)$. In this section we prove the star-shapedness.

Recall ``star-shapedness'' of a subset $S\subseteq\mathbb{R}^d$: $S$ is \textit{star-shaped} if there exists a point $z$ such that for each point $s\in S$ the segment connecting $s$ and $z$ lies entirely within $S$.
We say that the free boundary $\Gamma(u)=\partial\{u>g\}$ is star-shaped with respect to the origin if the set $\{u>g\}$ is star-shaped with $z=0$. The star-shapedness property of a set rules out holes in the set.

\begin{proposition}
Let $u$ be a solution to (\ref{eq:VIR}). The free boundary $\Gamma(u)$ is star-shaped with respect to the origin.
\end{proposition}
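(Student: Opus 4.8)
The plan is to exploit the scaling/monotonicity structure of the obstacle problem \eqref{eq:VIR} to show that if $u(x) > g(x)$ then $u(\lambda x) > g(\lambda x)$ for all $\lambda \in (0,1]$; this is exactly the statement that $\{u > g\}$ is star-shaped with respect to the origin. The key observation is that $g(x) = \max\{x_1,\ldots,x_d,0\}$ is positively homogeneous of degree one, so $g(\lambda x) = \lambda g(x)$, and that the operator $-\tfrac12\Delta$ interacts cleanly with dilations. Concretely, for $\lambda \in (0,1]$ I would introduce the rescaled function $u_\lambda(x) := \tfrac{1}{\lambda} u(\lambda x)$ and try to show that $u_\lambda$ is a subsolution to a problem that can be compared with $u$, forcing $u_\lambda \le u$; since $u_\lambda \ge g$ automatically (from $u(\lambda x) \ge g(\lambda x) = \lambda g(x)$, dividing by $\lambda$), the inequality $u \ge u_\lambda \ge g$ would not by itself give strictness, so the real content is to propagate the \emph{strict} inequality $u > g$ inward along rays.

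The first step is to record how $-\tfrac12\Delta$ transforms: $\Delta\big(u(\lambda x)\big) = \lambda^2 (\Delta u)(\lambda x)$, so $-\tfrac12\Delta u_\lambda(x) = -\tfrac{\lambda}{2}(\Delta u)(\lambda x)$. On the contact-free region one has $-\tfrac12\Delta u = -c$, hence $-\tfrac12\Delta u_\lambda = -\lambda c \ge -c$ there (using $\lambda \le 1$ and $c>0$), i.e. $u_\lambda$ is a \emph{supersolution} of the PDE part $-\tfrac12\Delta v + c \ge 0$ at such points. Second, I would check the obstacle side: wherever $u(\lambda x) = g(\lambda x)$ we have $u_\lambda(x) = g(x)$, and wherever $u(\lambda x) > g(\lambda x)$ we have $u_\lambda(x) > g(x)$; combining, $\min\{-\tfrac12\Delta u_\lambda + c,\ u_\lambda - g\} \ge 0$ in the viscosity sense, so $u_\lambda$ is a global viscosity \emph{supersolution} to \eqref{eq:VIR}. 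Third, since $u$ is a viscosity subsolution and a supersolution, and $u_\lambda$ grows at most linearly (inheriting \eqref{eq:boundmgle}), the comparison principle (Lemma~\ref{comparison}, whose growth hypothesis \eqref{cond: less2g} is satisfied because both $u$ and $u_\lambda$ have linear growth) yields $u_\lambda \ge u$ on all of $\mathbb{R}^d$. Wait — this gives $u(\lambda x) \ge \lambda u(x)$, which is the opposite direction from what a naive subsolution guess would suggest, but it is in fact the useful one: it says $u$ is \emph{superadditive along rays through the origin} after rescaling. Now if $x \in \{u > g\}$, then for $\lambda \in (0,1)$ we get $u(\lambda x) \ge \lambda u(x) > \lambda g(x) = g(\lambda x)$, so $\lambda x \in \{u > g\}$. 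This is precisely star-shapedness with respect to $0$.

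The main obstacle I anticipate is making the viscosity-solution manipulations rigorous at the nondifferentiable points — namely where the rays through the origin meet the free boundary $\Gamma(u)$, and along the ``diagonal'' ridges of $g$ where $g$ itself fails to be $C^1$. The clean way around this is to never differentiate $u$ directly: argue entirely with test functions, showing that if $\varphi$ touches $u_\lambda$ from below at a point $x_0$ then $\psi(y) := \lambda\,\varphi(y/\lambda)$ touches $u$ from below at $y_0 = \lambda x_0$, and transfer the subsolution inequality for $u$ to the required supersolution inequality for $u_\lambda$ using the dilation identity for $\Delta$ together with $\lambda \le 1$, $c > 0$; the obstacle term transfers directly from $g(\lambda y) = \lambda g(y)$. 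A secondary, more technical point is verifying that $u_\lambda$ genuinely satisfies the growth bound needed to invoke Lemma~\ref{comparison} on the unbounded domain $\Omega = \mathbb{R}^d$ — but this is immediate from \eqref{eq:boundmgle} since $u_\lambda(x) = \tfrac1\lambda u(\lambda x) \le \tfrac1\lambda\big(\sum_i |\lambda x_i| + C\big) = \sum_i |x_i| + C/\lambda$, which is still linear (hence $o(|x|^2)$) with $\lambda$ fixed. Once comparison is applied for each fixed $\lambda \in (0,1]$, the conclusion follows.
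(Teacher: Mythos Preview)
Your proposal is correct and follows essentially the same route as the paper: both arguments rescale by defining $\tfrac{1}{t}u(tx)$, use the positive $1$-homogeneity of $g$ together with the scaling $\Delta(\tfrac{1}{t}u(t\cdot))(x) = t(\Delta u)(tx)$, and then invoke the comparison principle (Lemma~\ref{comparison}) with the linear growth bound. The only cosmetic difference is that the paper takes $t\ge 1$, shows the rescaled function is a \emph{subsolution}, and concludes that the contact set is closed under outward dilations, whereas you take $\lambda\le 1$, show the rescaled function is a \emph{supersolution}, and conclude that $\{u>g\}$ is closed under inward dilations---these two statements are logically equivalent (set $t=1/\lambda$) and the resulting inequality $u(tx)\le t\,u(x)$ for $t\ge 1$ is the same in both.
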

\begin{proof}
To prove star-shapedness, we only need to show that if $u(x)=g(x)$ for some $x\in\mathbb{R}^d$, then $u(tx)=g(tx)$ holds for all $t\geq 1$.

Let $v(x):=\frac{1}{t}u(tx)$.
We first show that $v$ is a subsolution to (\ref{eq:VIR}). In fact, for any $x\in\mathbb{R}^d$, if $v(x)> g(x)$ then \[u(tx)>tg(x)=t\max\{x_1,...x_d,0\}=g(tx).\] Thus,
\begin{equation}\label{P.e.2}
    -\frac{1}{2}(\Delta u)(tx)\leq -c\quad\text{ in the viscosity sense.}
\end{equation}

To show that $-\frac{1}{2}\Delta v(x)\leq -c$ in the viscosity sense, take any $\varphi\in\mathcal{C}^2$ that touches $v$ at $x$ from above. Then $\varphi^t(\cdot):=t\varphi(\cdot/t)$ touches $u$ at $tx$ from above. It follows from \eqref{P.e.2} that 
\[-\frac{1}{2}(\Delta \varphi^t)(tx)=-\frac{1}{2t}\Delta \varphi(x)\leq -c,\]
which implies $-\frac{1}{2}\Delta \varphi(x)\leq -tc\leq -c$. Therefore
$-\frac{1}{2}\Delta v(x)\leq -c$ in the viscosity sense.
So we conclude that $v$ is a subsolution.
Now take $x^0\in\mathbb{R}^d$ such that $u(x^0)=g(x^0)$.
From the order of $u$ and $v$, we get
\[u(tx^0)\leq t u(x^0)=t g(x^0)=g(tx^0).\]
On the other hand, $u(tx^0)\geq g(tx^0)$ by definition, so we must have $u(tx^0)=g(tx^0)$.
\end{proof}

 Figure \ref{fig:parallel} shows the continuation and stopping regions, as well as the free boundary separating them for the case of $d=2$.  The figure illustrates the star-shapedness of the free boundaries.
\begin{figure}[h]
\begin{center}
\includegraphics[width=0.5\textwidth]{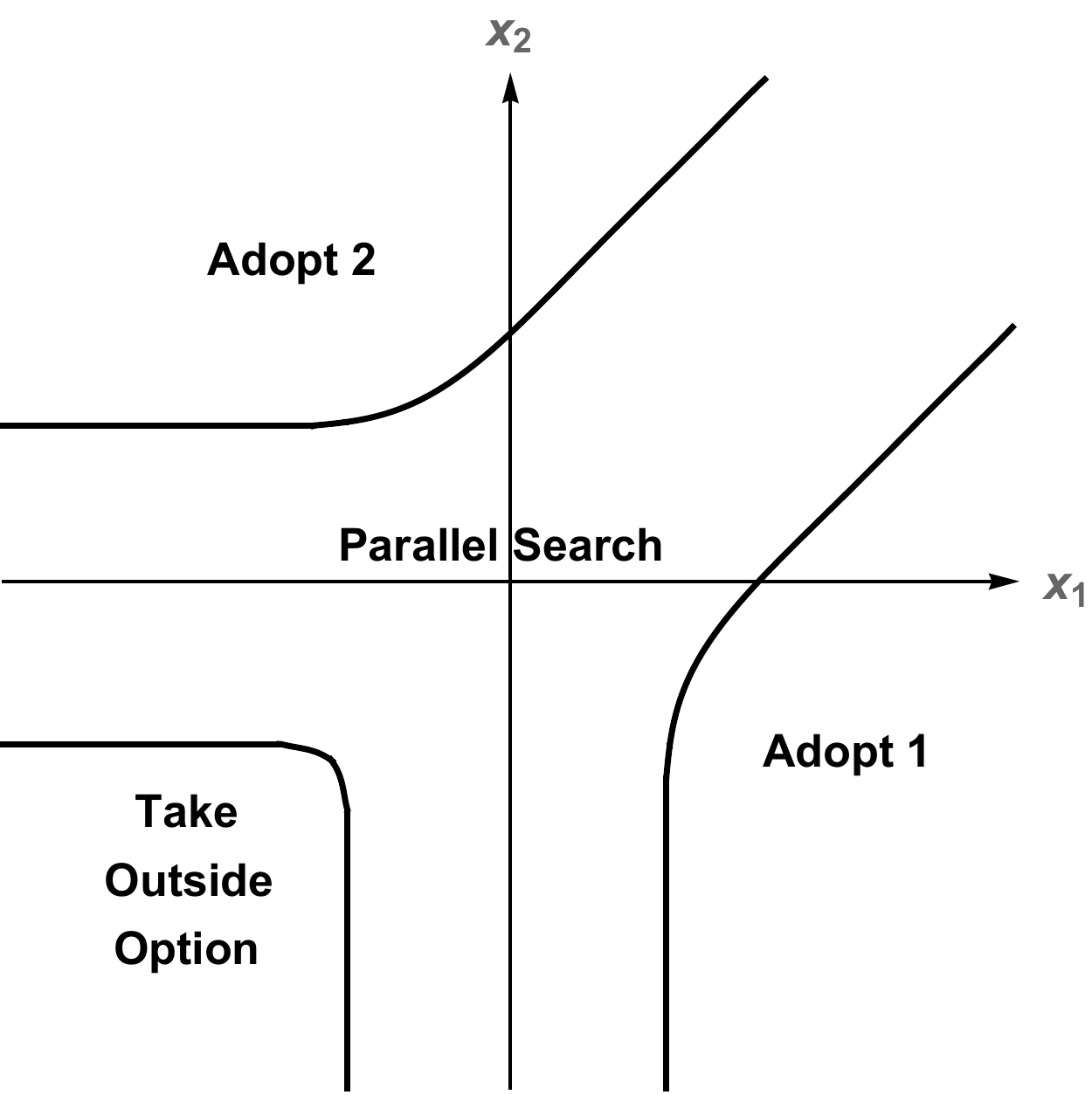}
\caption{Optimal parallel search strategy in two dimensions.}
\label{fig:parallel}
\end{center}
\end{figure}

As shown by Figure \ref{fig:parallel}, the optimal search strategy is quite intuitive---roughly speaking, the DM should stop searching and adopt alternative $i$ if and only if $x_i$ is relatively high compared with $x_j$ and the outside option of $0$, and she should stop searching and adopt the outside option when both $x_1$ and $x_2$ are relatively low. When $x_j$ is relatively low, the DM will continue to search on the two alternatives if and only if $x_i$ is near $0$, so as to make a clear distinction between alternative $i$ and the outside option. When both $x_1$ and $x_2$ are relatively high, the DM will continue to search if and only if $x_1$ and $x_2$ are close to each other, so as to to make a clear distinction between the two alternatives 1 and 2.

\section{Asymptotics}
\label{sc4}

\quad In this section, we study the free boundary of the solution near $x_1 = \ldots = x_d \rightarrow \infty$.
We provide a detailed analysis for the case with $d = 2$, and compare it with the case in which the DM can only search sequentially, learning one alternative at a time. We also provide lower and upper bounds for the general case with $d \ge 2$.

\subsection{Dimension of $d = 2$}
\quad In the case of $d=2,$ the PDE (\ref{eq:VIR}) specializes to
\begin{equation}
\label{eq:VIR2}
\min \left\{ -\frac{1}{2} \Delta u + c, u - \max\{x_1,x_2,0\} \right\} = 0.
\end{equation}

The PDE (\ref{eq:VIR2}) does not have an explicit solution for the case of $d=2$, so it is natural to ask about the properties of the solution, in particular those of free boundaries. There are three interesting regimes of asymptotic behavior:
\begin{enumerate}
\item
$x_1 \rightarrow 0$ and $x_2 \rightarrow -\infty$,
\item
$x_1 \rightarrow -\infty$ and $x_2 \rightarrow 0$,
\item
$x_1 = x_2 \rightarrow \infty$.
\end{enumerate}

 The cases 1 and 2 boil down to the search problem of one alternative, since the other alternative has large negative value and thus loses the competition to its counterpart.
A classical smooth-pasting technique shows that the distance of the free boundaries to $x$-axis (resp. $y$-axis) at $-\infty$ is $1/(4c)$, as illustrated in Figure \ref{fig:parallel}.
The case (3) is subtle, since the values of two products are close so there is a competitive search.
One interesting question is to determine the distance from the free boundary to the line $x_1= x_2$ at infinity.

 We start with the following change of coordinates:
$t =\frac{ x_1 + x_2}{\sqrt{2}}$ and $s = \frac{x_1 - x_2}{\sqrt{2}}$.
Consider the domain $t \ge 0$, and the PDE (\ref{eq:VIR2}) becomes
\begin{equation}
\label{eq:VI+}
 \displaystyle \min \left\{-\frac{1}{2} \Delta \tilde{u} + c, \tilde{u} - \tilde{g} \right\} = 0  \mbox{ for all } (t,s)\in \mathbb{R}^2
\end{equation}
where
\[\tilde{u}(t,s):=u\left(\frac{t+s}{\sqrt{2}},\frac{t-s}{\sqrt{2}}\right) \ \, \mbox{ and } \ \, \tilde{g}(t,s):=\max\left\{\frac{t + |s|}{\sqrt{2}},0\right\} .\]

We first prove a lower bound on the free boundary $\Gamma(u)$ for $t\geq0$ by the following lemma.
\begin{lemma}[Lower bound of the free boundary]
\label{lm:LB}
For $\theta>0$, let
\begin{equation}
\label{eq:gc}
\eta_\theta(t,s):=
\left\{ \begin{array}{lcl}
 \displaystyle \frac{t}{\sqrt{2}}+{\theta s^2} + \frac{1}{8\theta} & \mbox{for } |s| \le \frac{1}{2\sqrt{2}\theta} \\
 \displaystyle \frac{t }{\sqrt{2}}+\frac{|s|}{\sqrt{2}} & \mbox{for } |s| > \frac{1}{2\sqrt{2}\theta}.
\end{array},\right.
\end{equation}
which is a $\mathcal{C}^1$ function. Then we have,
$$
	\tilde{u}(t,s) \geq  \eta_c(t,s)\quad  \text{ in } \mathbb{R}^2.
$$
Moreover, for $t\geq0$, the free boundary $\Gamma(u)$ lies inside $\{|x_1-x_2|=\sqrt{2}|s|\geq\frac{1}{2c}\}$.
\end{lemma}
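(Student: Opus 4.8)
The plan is to exhibit $\eta_c$ (that is, the function \eqref{eq:gc} with $\theta=c$) as a global viscosity subsolution of \eqref{eq:VI+} and then to invoke the comparison principle of Lemma \ref{comparison} with $u_1=\eta_c$ and $u_2=\tilde u$. As noted above, $\tilde u$ solves \eqref{eq:VI+} on all of $\mathbb{R}^2$ (the Laplacian and the obstacle are preserved under the orthogonal change of coordinates), so in particular $\tilde u$ is a supersolution there. The growth hypothesis \eqref{cond: less2g} is satisfied: $\eta_c$ has at most linear growth (its quadratic piece lives on the bounded strip $|s|\le\tfrac{1}{2\sqrt{2}c}$), while $\tilde u\ge\tilde g\ge 0$, so $\max\{\eta_c,0\}+\max\{-\tilde u,0\}=\max\{\eta_c,0\}=O(|x|)$ and one may take $h(R)=C/R\to 0$; moreover $\Omega=\mathbb{R}^2$ has empty boundary. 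Lemma \ref{comparison} then yields $\tilde u\ge\eta_c$ in $\mathbb{R}^2$.

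The heart of the argument is the subsolution check, which I would carry out region by region, the choice $\theta=c$ being made precisely so that $-\tfrac12\Delta\eta_c+c=c-\theta=0$ on the quadratic part. On the open strip $\{|s|<\tfrac{1}{2\sqrt{2}c}\}$ the function $\eta_c$ is smooth with $\Delta\eta_c=2c$, so the classical inequality $\min\{-\tfrac12\Delta\eta_c+c,\ \eta_c-\tilde g\}\le 0$ holds because the first entry vanishes; since $-\tfrac12\operatorname{tr}(\cdot)$ is monotone in the Hessian, every $\mathcal C^2$ test function touching $\eta_c$ from above satisfies the same bound. On the open region $\{|s|>\tfrac{1}{2\sqrt{2}c}\}$ one has $\eta_c=\tfrac{t+|s|}{\sqrt{2}}$, hence $\eta_c-\tilde g=\tfrac{t+|s|}{\sqrt{2}}-\max\{\tfrac{t+|s|}{\sqrt{2}},0\}=\min\{\tfrac{t+|s|}{\sqrt{2}},0\}\le 0$, so again $\min\{\cdot,\cdot\}\le 0$. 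Finally, on the interface $\{|s|=\tfrac{1}{2\sqrt{2}c}\}$ --- the only place where $\eta_c$ is merely $\mathcal C^1$ --- a direct evaluation gives $\eta_c=\tfrac{t}{\sqrt{2}}+\tfrac{1}{4c}=\tfrac{t+|s|}{\sqrt{2}}$, so once more $\eta_c-\tilde g=\min\{\tfrac{t+|s|}{\sqrt{2}},0\}\le 0$; because this obstacle term carries no derivatives, the defining inequality holds for \emph{any} admissible test function, and no corner analysis is needed.

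For the ``moreover'' claim, fix $t\ge 0$ and $|s|<\tfrac{1}{2\sqrt{2}c}$. Then $\tilde g(t,s)=\tfrac{t+|s|}{\sqrt{2}}$, and combining $\tilde u\ge\eta_c$ with a short computation,
\[
\tilde u(t,s)-\tilde g(t,s)\ \ge\ \eta_c(t,s)-\tfrac{t+|s|}{\sqrt{2}}\ =\ c\,s^2-\tfrac{|s|}{\sqrt{2}}+\tfrac{1}{8c}\ =\ c\Bigl(|s|-\tfrac{1}{2\sqrt{2}c}\Bigr)^2\ >\ 0 .
\]
Hence $(t,s)$ lies in the open set $\{\tilde u>\tilde g\}$ and cannot belong to the free boundary. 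Since $\sqrt{2}|s|=|x_1-x_2|$, every point of $\Gamma(u)$ with $t\ge 0$ satisfies $|x_1-x_2|\ge\tfrac{1}{2c}$, which is the asserted inclusion.

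The step I expect to be most delicate is the subsolution verification at the interface $|s|=\tfrac{1}{2\sqrt{2}c}$, where $\eta_c$ fails to be $\mathcal C^2$; the construction is rigged so that the obstacle term is nonpositive exactly there, which sidesteps the issue without any viscosity computation at the corner. A secondary point to watch is that the comparison principle is applied on the unbounded domain $\mathbb{R}^2$, so one must confirm that the linear growth of $\eta_c$ together with the nonnegativity of $\tilde u$ genuinely satisfies \eqref{cond: less2g}.
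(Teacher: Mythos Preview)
Your proof is correct and follows essentially the same route as the paper: verify that $\eta_c$ is a viscosity subsolution to \eqref{eq:VI+}, invoke the comparison principle (Lemma~\ref{comparison}) to get $\tilde u\ge\eta_c$, and then read off the free-boundary inclusion from the strict inequality $\eta_c>\tilde g$ on the open strip for $t\ge 0$. The only cosmetic difference is that the paper packages your region-by-region subsolution check as the single observation that $\eta_c$ is in fact the \emph{exact} solution of the auxiliary obstacle problem $\min\{-\tfrac12\Delta\eta_c+c,\ \eta_c-\rho\}=0$ with the smaller obstacle $\rho:=\tfrac{t+|s|}{\sqrt{2}}\le\tilde g$, from which the subsolution property for the original obstacle $\tilde g$ is immediate; your explicit computation $\eta_c-\tilde g=c\bigl(|s|-\tfrac{1}{2\sqrt{2}c}\bigr)^{2}$ for the ``moreover'' clause is in fact more detailed than what the paper writes.
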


\begin{proof}
Note that $\eta_\theta$ is an approximation of $\tilde{g}$ for $t\geq 0$.
Moreover, it is not hard to check
when $\theta=c$,
\begin{equation*}
    \min \left\{- \frac{1}{2}\Delta \eta_c + c, \eta_c - \rho \right\}= 0 \quad \text{ for }(t,s)\in \mathbb{R}^2
\end{equation*}
where
\[\rho:=\frac{t + |s|}{\sqrt{2}}\leq \tilde{g}.\]
We know that
$\eta_c(\frac{x_1+x_2}{\sqrt{2}},\frac{x_1-x_2}{\sqrt{2}})$ is actually a subsolution to (\ref{eq:VIR2}) and the comparison principle yields $\tilde{u} \geq \eta_c$.
Observe that $\eta_c=\rho=\tilde{g}$ for $|s|\geq \frac{1}{2\sqrt{2}c}$ and $t\geq 0$. Therefore, in the half plane $t\geq 0$, the free boundary $\Gamma(u)$ lies inside $\{|s|\geq\frac{1}{2\sqrt{2}c}\}$.
\end{proof}

The result that $\Gamma(u)$ lies inside $\{|x_1-x_2|\geq\frac{1}{2c}\}$ can be viewed as a ``lower bound'' of the free boundary.

Now we turn to look for an ``upper'' bound of the free boundary. We need the following result.
\begin{lemma}
\label{lm:LU}
For $\epsilon\in (0,c]$, let
\begin{equation*}
{\varphi}_\epsilon(t,s) := \frac{1}{4c}h(\alpha t)+\eta_{c-\epsilon}(t,s)
\end{equation*}
where $h(t):=\max\{1-t,0\}^2$ and $\alpha:= 2\sqrt{c\epsilon}$.
Then we have for all $t\geq 0$,
\begin{equation*}
 \tilde{u}(t,s) \leq  {\varphi}_\epsilon(t,s).
\end{equation*}
\end{lemma}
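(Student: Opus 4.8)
The plan is to prove the bound by comparison on the upper half-plane $\Omega:=\{(t,s):t>0\}$. Because $\tilde u$ solves \eqref{eq:VI+} on all of $\mathbb{R}^2$, it is in particular a subsolution in $\Omega$, so by Lemma~\ref{comparison} it suffices to show: (i) $\varphi_\epsilon$ is a viscosity supersolution of \eqref{eq:VI+} in $\Omega$; (ii) $\tilde u\le\varphi_\epsilon$ on $\partial\Omega=\{t=0\}$; and (iii) the growth hypothesis \eqref{cond: less2g} holds for the pair $(\tilde u,\varphi_\epsilon)$ on $\Omega$. Then $\tilde u\le\varphi_\epsilon$ in $\Omega$ by Lemma~\ref{comparison}, and on $\{t=0\}$ by (ii), which together give the claim for all $t\ge0$. (For $\epsilon=c$ one has $\eta_0\equiv+\infty$, so the right-hand side is $+\infty$ and nothing is to prove; assume $\epsilon\in(0,c)$.)

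For (i): $\varphi_\epsilon$ is $\mathcal{C}^1$ and piecewise smooth, the set $\{t>0\}$ being cut by the interfaces $\{\alpha t=1\}$ and $\{|s|=\tfrac{1}{2\sqrt{2}(c-\epsilon)}\}$ into at most four open pieces. On each piece, $\partial_{tt}\!\left(\tfrac{1}{4c}h(\alpha t)\right)$ equals $\tfrac{\alpha^2}{2c}=2\epsilon$ when $\alpha t<1$ and $0$ when $\alpha t>1$ (using $\alpha^2=4c\epsilon$), $\partial_{ss}\eta_{c-\epsilon}$ equals $2(c-\epsilon)$ in the quadratic regime and $0$ outside it, and all other second derivatives vanish; hence $\Delta\varphi_\epsilon\le 2\epsilon+2(c-\epsilon)=2c$, i.e.\ $-\tfrac12\Delta\varphi_\epsilon+c\ge0$, on each piece. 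Across each interface the relevant one-dimensional second derivative jumps \emph{downward} ($2\epsilon\to0$, resp.\ $2(c-\epsilon)\to0$), so the distributional Laplacian of $\varphi_\epsilon$ is still $\le 2c$; since $\varphi_\epsilon$ is $\mathcal{C}^1$, this yields $-\tfrac12\Delta\varphi_\epsilon+c\ge0$ in the viscosity sense (no $\mathcal{C}^2$ function can touch $\varphi_\epsilon$ from below at a point with Laplacian exceeding $2c$ there). Moreover $\varphi_\epsilon\ge\tilde g$ on $\{t\ge0\}$: indeed $h\ge0$, and $\eta_{c-\epsilon}\ge\rho=\tfrac{t+|s|}{\sqrt{2}}=\tilde g|_{t\ge0}$ because $\eta_\theta-\rho=\theta\bigl(|s|-\tfrac{1}{2\sqrt{2}\theta}\bigr)^2\ge0$ in the quadratic regime and $\eta_\theta=\rho$ outside it. Hence $\min\{-\tfrac12\Delta\varphi_\epsilon+c,\varphi_\epsilon-\tilde g\}\ge0$ in $\Omega$.

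For (ii): from $\max\{b_1,b_2,0\}\le\max\{b_1,0\}+\max\{b_2,0\}$ and the splitting $c\tau=\tfrac{c}{2}\tau+\tfrac{c}{2}\tau$, subadditivity of the supremum gives $u(x_1,x_2)\le v(x_1)+v(x_2)$, where $v$ is the one-dimensional value function with running cost $c/2$; a classical smooth-pasting computation gives $v(x)=\tfrac{c}{2}\bigl(x+\tfrac{1}{2c}\bigr)^2$ for $|x|\le\tfrac{1}{2c}$, $v(x)=0$ for $x\le-\tfrac{1}{2c}$, and $v(x)=x$ for $x\ge\tfrac{1}{2c}$. On $\{t=0\}$, $x_1=s/\sqrt{2}$ and $x_2=-s/\sqrt{2}$, and a short computation (distinguishing $|s|\le\tfrac{1}{\sqrt{2}c}$ from $|s|>\tfrac{1}{\sqrt{2}c}$) shows $v(s/\sqrt{2})+v(-s/\sqrt{2})=\eta_{c/2}(0,s)$, so $\tilde u(0,s)\le\eta_{c/2}(0,s)$. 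Since $\eta_{c/2}(0,s)-\rho(0,s)$ is maximized at $s=0$ with value $\tfrac{1}{4c}$, we get $\eta_{c/2}(0,s)\le\tfrac{1}{4c}+\rho(0,s)\le\tfrac{1}{4c}+\eta_{c-\epsilon}(0,s)=\tfrac{1}{4c}h(0)+\eta_{c-\epsilon}(0,s)=\varphi_\epsilon(0,s)$. For (iii): $\tilde u$ has at most linear growth by Corollary~\ref{lem:VI+bd}, and $\varphi_\epsilon\ge0$ on $\{t\ge0\}$, so the numerator in \eqref{cond: less2g} is $O(|x|)$ on $\Omega$ and the required $\limsup$ is $\le C/R\to0$. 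Lemma~\ref{comparison} then gives $\tilde u\le\varphi_\epsilon$ in $\Omega$, completing the proof.

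I expect the only genuinely non-routine step to be (ii). The supersolution $\varphi_\epsilon$ grows along the line $\{x_1+x_2=0\}$ only like $\tfrac{|s|}{\sqrt{2}}=\tfrac12\sum_i|x_i|$, which is strictly slower than the general bound $u\le\sum_i|x_i|+C$ of Corollary~\ref{lem:VI+bd}, so one cannot simply quote that bound on the boundary. The key observations are that the sharper ``tensor'' estimate $u(x_1,x_2)\le v(x_1)+v(x_2)$, obtained by splitting the cost evenly between the two coordinates, restricts along $\{x_1+x_2=0\}$ to exactly $\eta_{c/2}(0,\cdot)$, and that $\eta_{c/2}(0,\cdot)$ lies precisely $\tfrac{1}{4c}$ above $\rho$ --- exactly the slack provided by the term $\tfrac{1}{4c}h(0)$ of $\varphi_\epsilon$ at $t=0$.
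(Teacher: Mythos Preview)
Your proof is correct and follows the same architecture as the paper's: verify that $\varphi_\epsilon$ is a supersolution of \eqref{eq:VI+} on $\{t>0\}$, check the boundary comparison at $\{t=0\}$, note linear growth, and invoke Lemma~\ref{comparison}. The paper obtains the boundary inequality by simply quoting $\tilde u(0,s)\le\tfrac{|s|}{\sqrt{2}}+\tfrac{1}{4c}$ as an instance of \eqref{eq:boundmgle0}, whereas your tensor estimate $u(x_1,x_2)\le\psi_{c/2}(x_1)+\psi_{c/2}(x_2)$ (and the identification of its restriction to $\{t=0\}$ with $\eta_{c/2}(0,\cdot)$) makes explicit why the constant is exactly $\tfrac{1}{4c}$---a point on which the paper is less transparent, since the generic constant in Lemma~\ref{lem:VI+bd0} does not obviously equal $\tfrac{1}{4c}$.
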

\begin{proof}
It follows from (\ref{eq:boundmgle0}) that
\[ \tilde{g}(t,s)\leq \tilde{u}(t,s)\leq  \max\left\{\frac{t+s}{\sqrt{2}},0\right\}+\max\left\{\frac{t-s}{\sqrt{2}},0\right\}+\frac{1}{4c}.\]
We get
\begin{equation}
\label{cond:1}
{\frac{|s|}{\sqrt{2}} = } \tilde{g}(0,s)\leq \tilde{u}(0,s)\leq \frac{|s|}{\sqrt{2}}+\frac{1}{4c}.
\end{equation}

 Now we want to compare ${\varphi}_\epsilon$ with $u$ in the half plane $t>0$. On the boundary of $t=0$,
\begin{align*}
{\varphi}_\epsilon(0,s)&=\frac{1}{4c}+\eta_{c-\epsilon}(0,s)\\
&\geq \frac{1}{4c}+\tilde{g}(0,s) &\text{(by definition of $\eta_{c-\epsilon}$) }\\
&\geq \tilde{u}(0,s) &\text{(by (\ref{cond:1}))}.
\end{align*}
Also it is not hard to check that ${\varphi}_\epsilon\in\mathcal{C}^1$ and
${\varphi}_\epsilon(t,s)\geq \tilde{g}(t,s)$ for all $t\geq 0,s\in\mathbb{R}$.
Moreover when $|s|\leq\frac{1}{2\sqrt{2}(c-\epsilon)}$, we have
\begin{equation*}
\Delta {\varphi}_\epsilon=\frac{\alpha^2}{2c}+2(c-\epsilon)\leq 2c \hbox{ if
} \alpha\leq {2\sqrt{c\epsilon}}.
\end{equation*}
When $|s|\geq \frac{1}{2\sqrt{2}(c-\epsilon)}$, there is $\Delta {\varphi}_\epsilon\leq 2\epsilon\leq 2c$.
Finally note that both $\tilde{u}$ and ${\varphi}_\epsilon$ have linear growth at infinity.
The comparison principle (Lemma \ref{comparison} with $\Omega=\{t>0\}$) yields $\tilde{u}\leq {\varphi}_\epsilon$ for $t>0$.
\end{proof}

Based on Lemmas \ref{lm:LB} and \ref{lm:LU}, we can obtain the asymptotic behavior of solutions $u$ close to $x_1=x_2 \rightarrow +\infty$. To provide a quantitative description about the convergence of the free boundary of $u$ to the one of $\eta_c$ as $t=x_1+x_2\to \infty$, we define the distance function
\[d_{FB}(T):=\text{ distance between } \Gamma(u)|_{\frac{x_1+x_2}{\sqrt{2}}\geq T} \text{ and }\left\{|x_1-x_2|=\frac{1}{2c}\right\}. \]
Here $\{|x_1-x_2|=\frac{1}{2c}\}$ is the free boundary of $\eta_c$.
By symmetry of $\Gamma(u)$ with respect to the line of $x_1-x_2=0$, we only need to consider the situation when $x_1-x_2\geq 0$.

The following proposition characterizes the asymptotic behaviors of both the value function and the free boundary close to $x_1=x_2 \rightarrow +\infty$, with the proof provided in Appendix.
\begin{proposition} [Upper bound of the free boundary]
\label{prop:asymptotics}
For $(x_1,x_2)$ in the neighborhood of $x_1=x_2 \rightarrow +\infty$,
\begin{align*}
u(x_1,x_2) &\rightarrow \eta_c \left(\frac{x_1+x_2}{\sqrt{2}},\frac{x_1-x_2}{\sqrt{2}}\right)=
\left\{ \begin{array}{lcl}
 \displaystyle \frac{x_1+x_2}{2}+\frac{c|x_1-x_2|^2}{2}+\frac{1}{8c} & \mbox{for } |x_1 - x_2| \le \frac{1}{2c}  \\
 \displaystyle \frac{x_1+x_2}{2} + \frac{|x_1-x_2|}{2} & \mbox{for } |x_1 - x_2| > \frac{1}{2c}.
\end{array}\right.,\\
\Gamma(u) &\rightarrow \left\{|x_1 - x_2| = \frac{1}{2c}\right\}.
\end{align*}
Moreover, for all $T\geq \frac{1}{2c}$, then
\[d_{FB}(T)\leq \frac{1}{8\sqrt{2}c^3T^2}+O\left(\frac{1}{c^5T^4}\right).\]
\end{proposition}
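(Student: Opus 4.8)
The plan is to read everything off the two barriers already constructed: the subsolution $\eta_c$ from Lemma~\ref{lm:LB} and the supersolution $\varphi_\epsilon$ from Lemma~\ref{lm:LU}, which together sandwich $\tilde u$ as $\eta_c\le\tilde u\le\varphi_\epsilon$ on $\{t\ge 0\}$ for every $\epsilon\in(0,c]$. The single new observation needed is that the cap term $\tfrac{1}{4c}h(\alpha t)$ in $\varphi_\epsilon$ vanishes identically for $t\ge 1/\alpha = 1/(2\sqrt{c\epsilon})$, since $h(r) = \max\{1-r,0\}^2 = 0$ for $r\ge 1$. Thus on the half plane $\{t\ge 1/(2\sqrt{c\epsilon})\}$ one has the clean two-sided estimate $\eta_c\le\tilde u\le\eta_{c-\epsilon}$, and the rest of the proof just chooses $\epsilon$ well and converts these inequalities into statements about the contact set $\{\tilde u=\tilde g\}$ and its complement.

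For the convergence of $u$, I would first note that a short comparison of the two piecewise formulas gives $0\le\eta_{c-\epsilon}(t,s)-\eta_c(t,s)\le\epsilon/(8c(c-\epsilon))$ uniformly in $s$ (the supremum being attained at $s=0$). Hence $0\le\tilde u(t,s)-\eta_c(t,s)\le\tfrac{1}{4c}h(2\sqrt{c\epsilon}\,t)+\epsilon/(8c(c-\epsilon))$; taking $\epsilon=1/t$ for $t$ large makes $2\sqrt{c\epsilon}\,t=2\sqrt{ct}\to\infty$, so the first term is $0$ while the second is $O(1/t)$. This yields $\tilde u(t,s)\to\eta_c(t,s)$ as $t\to\infty$, uniformly in $s$, which is the first display of the proposition after undoing the rotation $(t,s)\leftrightarrow(x_1,x_2)$.

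For the free boundary, Lemma~\ref{lm:LB} already gives the inner inclusion: $\tilde u\ge\eta_c>\tilde g$ strictly on $\{t\ge 0,\ 0<|s|<1/(2\sqrt2\,c)\}$, so $\Gamma(u)\cap\{t\ge 0\}\subseteq\{|s|\ge 1/(2\sqrt2\,c)\}$. For the matching outer inclusion, on $\{t\ge 1/(2\sqrt{c\epsilon}),\ |s|\ge 1/(2\sqrt2(c-\epsilon))\}$ we have $\tilde g\le\tilde u\le\eta_{c-\epsilon}=\tilde g$ (using $\tilde u\ge\tilde g$ always to upgrade to equality), so that region lies in the contact set; equivalently, for $t\ge 1/(2\sqrt{c\epsilon})$ the free boundary lies in $\{|s|\le 1/(2\sqrt2(c-\epsilon))\}$. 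Now set $\epsilon:=1/(4cT^2)$, which is admissible ($\epsilon\le c$) exactly because $T\ge 1/(2c)$, and which is chosen so that $1/(2\sqrt{c\epsilon})=T$; then for $t\ge T$ the free boundary is trapped in $1/(2\sqrt2\,c)\le|s|\le 1/(2\sqrt2(c-\epsilon))$. Since the change of variables is a Euclidean isometry and $\{|x_1-x_2|=1/(2c)\}$ corresponds to $\{|s|=1/(2\sqrt2\,c)\}$, we get $d_{FB}(T)\le\tfrac{1}{2\sqrt2}\bigl(\tfrac{1}{c-\epsilon}-\tfrac1c\bigr)=\tfrac{\epsilon}{2\sqrt2\,c(c-\epsilon)}$; substituting $\epsilon=1/(4cT^2)$ and expanding $(c-\epsilon)^{-1}$ in powers of $\epsilon/c$ gives $d_{FB}(T)\le\tfrac{1}{8\sqrt2\,c^3T^2}+O(1/(c^5T^4))$, and letting $T\to\infty$ yields $\Gamma(u)\to\{|x_1-x_2|=1/(2c)\}$.

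The substantive work is already contained in Lemmas~\ref{lm:LB}--\ref{lm:LU} (construction and verification of the barriers), so what remains is a matter of care rather than depth. The step I expect to need the most attention is the passage from pointwise inequalities to set inclusions --- in particular using $\tilde u\ge\tilde g$ to turn $\tilde u\le\eta_{c-\epsilon}$ into the equality $\tilde u=\tilde g$ on the relevant region --- together with the balancing of the threshold $1/(2\sqrt{c\epsilon})\le T$ against the strip width $\sim\epsilon/c^2$, which forces the choice $\epsilon=1/(4cT^2)$ and hence the hypothesis $T\ge 1/(2c)$. One also has to be precise that $d_{FB}(T)$ and ``$\Gamma(u)\to\dots$'' are one-sided (Hausdorff-type) statements, which is exactly what the strip containment delivers.
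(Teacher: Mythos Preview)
Your proposal is correct and follows essentially the same approach as the paper: sandwich $\tilde u$ between $\eta_c$ and $\varphi_\epsilon$, observe that the cap $\tfrac{1}{4c}h(\alpha t)$ vanishes for $\alpha t\ge 1$ so that $\eta_c\le\tilde u\le\eta_{c-\epsilon}$ there, then choose $\epsilon=1/(4cT^2)$ (equivalently $\alpha T=1$) and read off the strip $\tfrac{1}{2\sqrt2\,c}\le|s|\le\tfrac{1}{2\sqrt2(c-\epsilon)}$ trapping the free boundary, expanding $\tfrac{\epsilon}{2\sqrt2\,c(c-\epsilon)}$ to get the stated bound on $d_{FB}(T)$. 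Your treatment is in fact slightly more complete than the paper's, which does not spell out the uniform estimate $0\le\eta_{c-\epsilon}-\eta_c\le\epsilon/(8c(c-\epsilon))$ underlying the convergence $u\to\eta_c$.
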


As for the limit $\eta_c(x_1+x_2,x_1-x_2)$, the distance of the free boundary to the line of $x_1 = x_2$ is always $\frac{1}{2^{3/2}c}$.
Note that $\frac{1}{2^{3/2}c} > \frac{1}{4c}$ which is the distance of the free boundaries to $x$ or $y$-axis at $-\infty$.
This means that the search region is larger in case of competition.
In other words, people have larger tolerance for search if two products are as good as each other.

\subsection{Comparison with Sequential Search}

 \quad One could consider a different technology for information search, as the one considered in \cite{ke16}, where the DM searches costly and sequentially over multiple alternatives, learning only one alternative at a time. Let the sequential search cost be $c'$.

 Suppose $c' =c/2$. That is, it costs twice as much to search two alternatives in parallel as to search one alternative at a time. Note that in the sequential search case, the DM could replicate any parallel search strategy considered above by alternating infinitely fast between the two alternatives. Therefore, we have that the region in  $x_1$-$x_2$ space where it is optimal to continue to search (i.e. $\{u>g\}$) is larger for the case of sequential search compared with that for the case of parallel search. In other words, the contact set is further away from the origin for the case of sequential search.

Figure \ref{fig:comparison} illustrates the sequential and parallel search strategies for the case with $d=2$. The black solid lines represent the free boundaries for the case of parallel search, the same as Figure \ref{fig:parallel}; while the gray solid lines represent the free boundaries for the case of sequential search. The gray dashed line represents $x_1=x_2$. For the case of sequential search, when it is optimal for the DM to continue to search, the DM optimally searches alternative $i$ if and only if $x_i\geq x_j$ (\citealt{ke16}). The figure illustrates that the gray lines are further away from the origin than the black lines.
\begin{figure}[h]
\begin{center}
\includegraphics[width=0.5\textwidth]{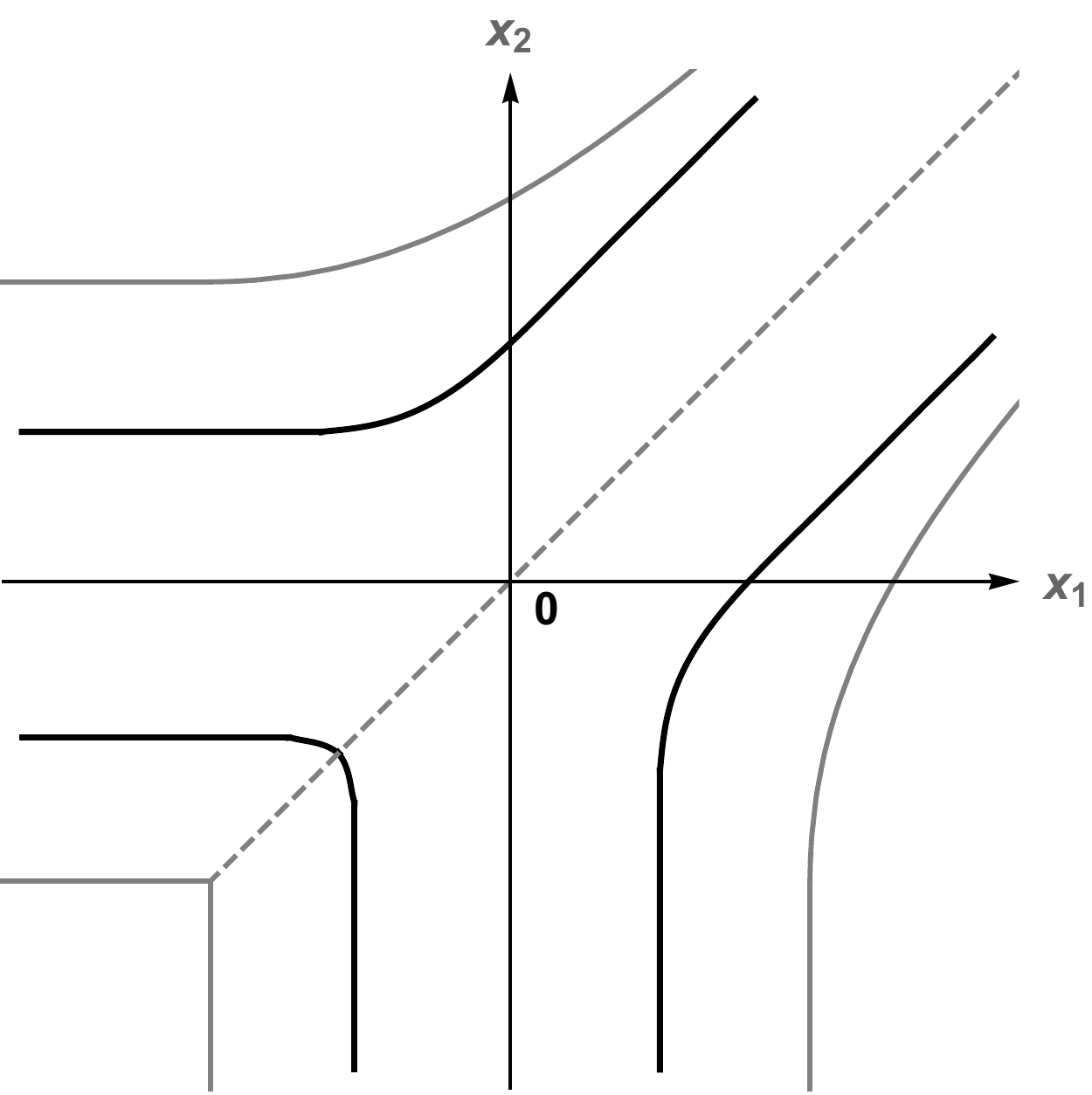}
\caption{Comparison of the optimal parallel search strategy under search cost $c$ with the optimal sequential search strategy under search cost $c'=c/2$.}
\label{fig:comparison}
\end{center}
\end{figure}

One could also wonder how the asymptotic behavior of the free boundary compares between sequential and parallel search. On can obtain that when the DM searches sequentially, the distance of the free boundary to the line $x_1 = x_2$ when $x_1=x_2 \rightarrow + \infty$ converges to $\frac{1}{4\sqrt{2}c'}= \frac{1}{2\sqrt{2}c}$ (\citealt{ke16}) which is the same as the distance in the case of parallel search. That is, the gray and black lines in Figure \ref{fig:comparison} will converge to $|x_1-x_2|= \frac{1}{2c}$ as $x_1$ and $x_2$ go to positive infinity.

It is also interesting to consider the case in which the DM has the option to either search only one alternative at cost $c'$ or search both alternatives at cost $c$ with economies of scale on the number of alternatives searched, that is, $c' \in (c/2,c).$ Although the full-scale analysis of this problem is beyond the scope of this paper, one could expect that in such a setting when it is optimal to continue to search for information, the DM will choose to search for information on both alternatives simultaneously when the expected valuations of the two alternatives are relatively close, and choose to search for information on only one alternative otherwise.

It is interesting, however, that one can obtain a general result in that setting that for the state close to $x_1 = x_2 \rightarrow \infty$ it is always optimal to choose the search technology where both alternatives are being searched simultaneously.

\begin{proposition} Consider a DM, who can search either in parallel at cost $c$ or sequentially at cost $c' \in (c/2, c)$. For $x_1$ and $x_2$ sufficiently high and close to each other, it is optimal for the DM to search in parallel.
\end{proposition}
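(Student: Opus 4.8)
The plan is to characterize the combined value function by a variational inequality, rule out stopping and sequential search near the diagonal, and deduce that parallel search is optimal there.

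Write $U$ for the value function of the combined problem. A dynamic programming argument as for Corollary~\ref{lem:VI+bd} (and the comparison machinery of Section~\ref{sc2}) shows that $U$ is the unique viscosity solution with linear growth of the combined variational inequality
\begin{equation*}
\min\Bigl\{-\tfrac12\Delta U+c,\ -\tfrac12\partial_{x_1}^2U+c',\ -\tfrac12\partial_{x_2}^2U+c',\ U-g\Bigr\}=0,\qquad g=\max\{x_1,x_2,0\},
\end{equation*}
and that $U\ge u$, where $u$ is the value function of Sections~\ref{sc1}--\ref{sc4}, since the DM now has strictly more options. Saying that ``it is optimal to search in parallel'' at $x\in\{U>g\}$ means that the first term attains the minimum near $x$, i.e.\ $-\tfrac12\Delta U+c=0$ there; the goal is to prove this on a neighbourhood $N$ of $\{x_1=x_2\to+\infty\}$.

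Two preliminary points locate the argument and show where $c'>c/2$ is used. First, stopping is excluded: by Lemma~\ref{lm:LB}, $U\ge u\ge\eta_c$, and $\eta_c-g=c\bigl(|x_1-x_2|/\sqrt2-\tfrac1{2\sqrt2c}\bigr)^2$ is bounded below by a positive constant on $\{x_1+x_2\ge0,\ |x_1-x_2|\le\delta_0\}$ for every $\delta_0<\tfrac1{2c}$, so $U>g$ there and the fourth term is never active. Second, in the region $|x_1-x_2|<\tfrac1{2c}$ one has $\partial_{x_1}^2\eta_c=\partial_{x_2}^2\eta_c=c$, so the sequential operators evaluated at $\eta_c$ give $-\tfrac12\partial_{x_i}^2\eta_c+c'=c'-\tfrac c2>0$ \emph{precisely because $c'>c/2$} (for $c'\le c/2$ this quantity is $\le0$, consistently with the fact that sequential search can then replicate parallel search at no greater cost). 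Hence if $U$ inherits enough of the regularity of $\eta_c$ near the diagonal, the sequential terms stay strictly positive and the parallel term is forced to vanish.

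Making this rigorous would have two parts. (a) \emph{Asymptotics of $U$.} By Proposition~\ref{prop:asymptotics} and interior elliptic estimates $u$ is $C^2$-close to $\eta_c$ near the diagonal, so $\partial_{x_i}^2u\to c<2c'$ and $u$ is a strict supersolution of the combined variational inequality on $D:=\{u>g\}\cap\{x_1+x_2>T\}$ for $T$ large. A localized verification estimate then gives, for any admissible policy with exit time $\sigma$ of $D$, $\mathbb E[u(X_{\sigma\wedge\tau})-\text{running cost}]\le u(x)$; the contribution of exits through the lower face $\{x_1+x_2=T\}$ is negligible, because reaching it requires the (martingale) sum-coordinate to drop by $\approx x_1+x_2-T$ while $|x_1-x_2|$ stays below $\tfrac1{2c}$, where the free boundary is $O(1)$-close --- a super-exponentially unlikely event. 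One concludes $U\to\eta_c$ uniformly on neighbourhoods of $\{x_1=x_2\to\infty\}$, with $\{U=g\}$ there asymptotic to $\{|x_1-x_2|=\tfrac1{2c}\}$. (b) \emph{Transfer to second derivatives.} $U$ is convex (a supremum of convex functionals of the starting point), and from $D^2U\ge0$ together with $\partial_{x_i}^2U\le2c'$, $\Delta U\le2c$ it lies in $C^{1,1}_{\mathrm{loc}}$. By (a), $U-\eta_c\to0$ near the diagonal, so blowing up at $z_n\to\infty$ along it, $U_n(y):=U(z_n+y)-U(z_n)-\nabla U(z_n)\cdot y$ gives nonnegative, convex, uniformly-$C^{1,1}$ solutions of the combined variational inequality on a fixed ball, converging (by convexity, $C^0$-convergence forces $C^{1,\alpha}_{\mathrm{loc}}$-convergence) to $\tfrac c2(x_1-x_2)^2$; since the nonnegative quantities $\partial_{y_t}^2U_n$ and $2c-\Delta U_n$ tend weak-$*$ to $0$ they tend to $0$ in $L^1_{\mathrm{loc}}$, whence $D^2U_n\to D^2\bigl[\tfrac c2(x_1-x_2)^2\bigr]$ and in particular $\partial_{x_i}^2U_n\to c$ in $L^1_{\mathrm{loc}}$. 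As $\partial_{x_i}^2U=2c'\ (\neq c)$ on the set where a sequential operator is active, that set has vanishing density at the diagonal; with the free-boundary structure of the problem one gets $\Delta U=2c$ a.e.\ on a neighbourhood $N$ of $\{x_1=x_2\to\infty\}$, which by $C^{1,1}$-regularity and interior elliptic theory holds classically on all of $N$. Thus the parallel term attains the minimum throughout $N$: parallel search is optimal.

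I expect part (b) to be the main obstacle and the one step that cannot be made soft: upgrading the $C^0$ asymptotics of $U$ to pointwise control of $\partial_{x_i}^2U$, and in particular ruling out a thin sequential-active region accumulating on the diagonal, uses the convexity of $U$ essentially (to convert weak-$*$ into strong $L^1$ convergence of Hessians) together with a non-degeneracy property of the free boundary separating the parallel- and sequential-active sets. Part (a) is routine modulo the barriers, the only delicate point being that, unlike in Lemmas~\ref{lm:LB}--\ref{lm:LU}, one cannot take $\varphi_\epsilon$ as a global supersolution, since near the anti-diagonal $x_1=-x_2$ sequential search strictly outperforms parallel search and $U$ exceeds $\varphi_\epsilon$ there.
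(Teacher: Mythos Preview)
The paper does not give a full proof of this proposition either: it offers an informal ``variance per unit cost'' argument and states that a formal proof follows from Lemma~\ref{lem lowb} together with the dynamic programming principle. The paper's route is: (i) by the barriers of Lemmas~\ref{lm:LB}--\ref{lm:LU} (and more generally Lemma~\ref{lem lowb}), near $x_1=x_2\to\infty$ the problem collapses to a one-dimensional problem in the difference $x_1-x_2$; (ii) in that reduced problem parallel search moves the difference with variance $2\,dt$ at cost $c\,dt$, while sequential search moves it with variance $dt$ at cost $c'\,dt$, so a time change shows parallel strictly dominates sequential precisely when $2/c>1/c'$, i.e.\ $c'>c/2$. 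Your observation $-\tfrac12\partial_{x_i}^2\eta_c+c'=c'-c/2>0$ is exactly this inequality recast in PDE form, so the heuristics agree.

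Your plan is genuinely different and considerably heavier: rather than reducing to one dimension, you stay with the full two-dimensional combined variational inequality and try to show pointwise that the parallel operator is the active one, via asymptotics of $U$ (part~(a)) and convergence of Hessians (part~(b)). Two issues deserve mention. In part~(a), your verification argument on $D=\{u>g\}\cap\{x_1+x_2>T\}$ needs to control $U-u$ on the free-boundary portion of $\partial D$, where you only know $u=g$, not $U=g$; since $U\ge u$ always, this does not close. A clean repair is to bound $U$ from above by the value of sequential search at cost $c/2$, which dominates both available technologies and whose asymptotic free boundary coincides with that of $u$ (this is exactly the observation made in the paragraph preceding the proposition). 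In part~(b), you pass from $L^1_{\rm loc}$ convergence of second derivatives to ``$\partial_{x_i}^2U\to c$'' and then from ``the sequential-active set has vanishing density'' to ``it is empty on a neighbourhood'' by invoking an unspecified free-boundary property; you rightly flag this as the hard step, and it is a genuine gap. The paper sidesteps it entirely: once the problem is one-dimensional, the time-change argument shows sequential search is \emph{never} optimal in the reduced problem, not merely on a set of full density, and the conclusion follows without any second-derivative analysis.
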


Here we provide an intuitive sketch of proof for the proposition. A formal proof can be obtained by applying Lemma \ref{lem lowb} below and invoking the dynamic programming principle, and is omitted.

When $x_1$ and $x_2$ are high, the DM is most likely to choose one of the alternatives rather than the outside option, and just does not know which alternative to choose. The DM is then mostly concentrated on the difference $x_1 - x_2$ to see when this difference is high enough so that the DM makes a decision on which alternative to pick and stop the search process.
 As shown above, at the limit, when $|x_1-x_2|\geq \frac{1}{2c}$, the DM prefers to stop and choose one alternative than to continue to search either sequentially or in parallel. On the other hand, at the limit, when $|x_1-x_2|< \frac{1}{2c}$, the DM will choose to continue to search. By searching the two alternatives in parallel in an infinitesimal time $dt$, the DM pays a search cost of $cdt$ and gets an update on $x_1 - x_2$ as $dx_1-dx_2$, the variance of which is $2dt$; on the other hand, by searching one alternative (say, alternative 1) sequentially in an infinitesimal time $dt$, the DM pays a search cost of $c'dt$ and gets an update on $x_1 - x_2$ as $dx_1$, the variance of which is $dt$. Therefore, the parallel search yields variance per search cost $2/c$, which is greater than $1/c'$, the variance per search cost in the case of sequential search. To summarize, for $c' \in (c/2,c)$, it is less expensive to obtain a certain variation when searching two alternatives simultaneously, than just searching sequentially on one alternative. This implies that it is more cost-effective for the DM to search in parallel.

\begin{figure}[h]
\begin{center}
\includegraphics[width=0.5\textwidth]{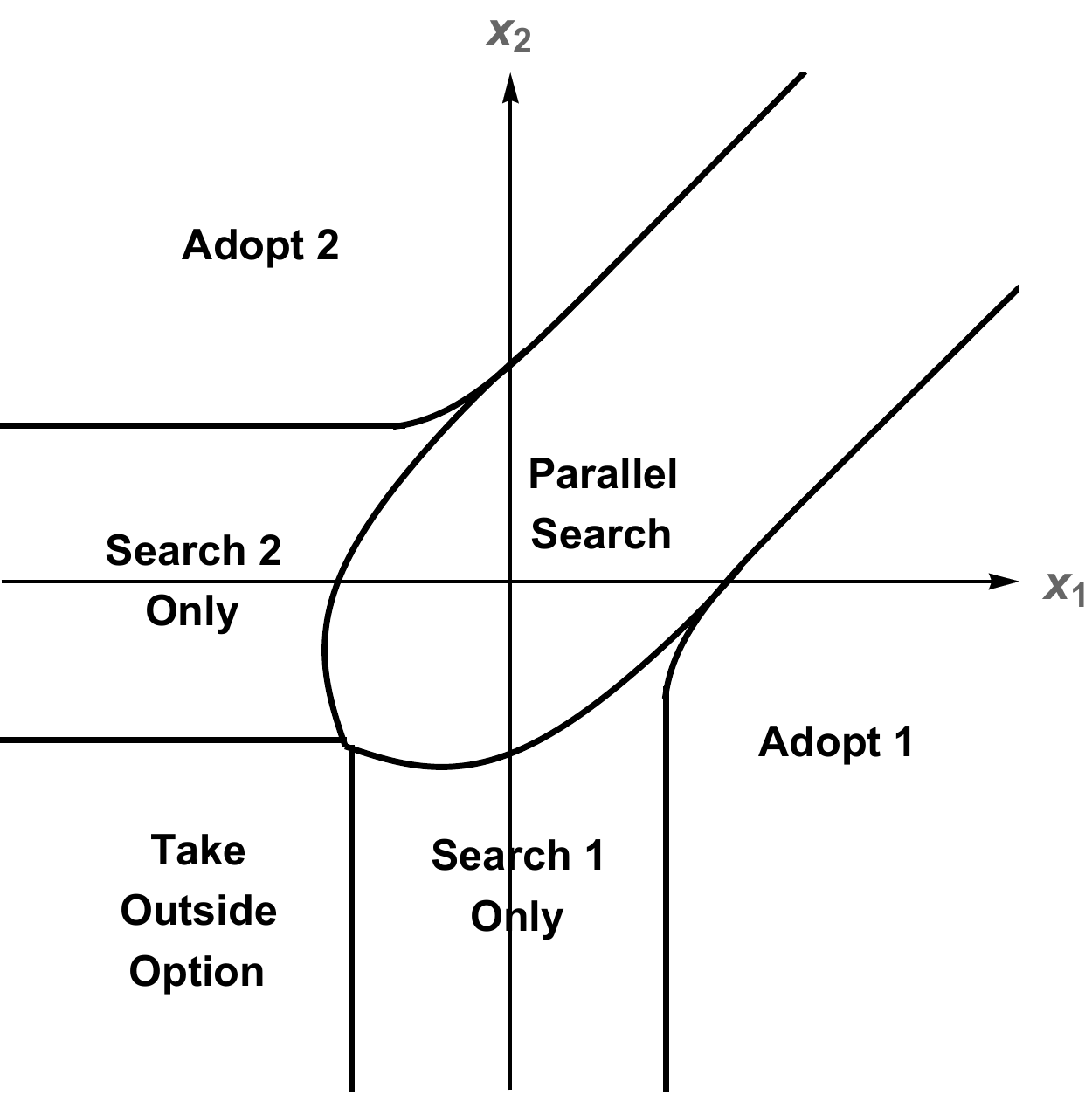}
\caption{The DM's optimal search strategy when he can search either sequentially or in parallel, with $c'=2c/3$.}
\label{fig:both}
\end{center}
\end{figure}
Figure \ref{fig:both} presents an example of the DM's optimal search strategy in this context of economics of scale of search costs, and illustrates that for $x_1$ and $x_2$ sufficiently high and close to each other, it is optimal to search the two alternatives in parallel.

\subsection{General dimension}

\quad Now we study the quantitative properties of the free boundary in the general dimension case. We provide an ``upper'' and ``lower'' bound of the free boundary.

 First for the upper bound, we will show that in the positive regime $x_i\geq 0$ for all $i$, the free boundary can not be too far away from the set $\{x_i=x_j \text{ for some }i\ne j\}$, and the distance grows at most linearly in the dimension $d$.


For any $\gamma>0$, define
\begin{equation}
    \label{set A}
    N(\gamma)=\left\{(x_1,...,x_d)| \, x_i\geq 0,\, |x_i-x_j|\geq {\gamma} \text{ for all }i\ne j\right\}.\end{equation}
The following proposition presents the first main result in this section, with the proof in the Appendix.

\begin{proposition}\label{thm ddim up}
Let $u_d$ be the solution to (\ref{eq:VIR}) in $\mathbb{R}^d$. There exists $\gamma>1$ independent of $d,c$ such that $\Gamma(u_d)$ lies inside the complement of $N(\frac{\gamma d}{c})$, i.e., $u_d(x)=g(x)$ for $x\in N(\frac{\gamma d}{c})$.
\end{proposition}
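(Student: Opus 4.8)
The plan is to construct an explicit $\mathcal{C}^1$ subsolution to (\ref{eq:VIR}) in $\mathbb{R}^d$ that already coincides with $g$ on the set $N(\gamma d/c)$, and then invoke the comparison principle (Lemma \ref{comparison}) to conclude $u_d \le \text{(subsolution)}$; combined with $u_d \ge g$ this forces $u_d = g$ there. The natural candidate is a $d$-dimensional analogue of the function $\eta_\theta$ from Lemma \ref{lm:LB}. Concretely, on the region where the coordinates are well-separated and ordered, $g(x) = \max\{x_1,\dots,x_d,0\}$ is smooth (it just equals the largest coordinate); the difficulty is only near the ``diagonal'' hyperplanes $\{x_i = x_j\}$ and near $\{x_i = 0\}$, where $g$ has corners. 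So I would build a function $\psi(x)$ that equals $g(x)$ outside a tube of width $\sim \gamma d/c$ around $\bigcup_{i\ne j}\{x_i=x_j\}\cup\bigcup_i\{x_i=0\}$, and inside that tube is a quadratic regularization of the corner (of the form $\frac{c}{2}(\text{gap})^2 + \text{const}$, mimicking the $\theta s^2 + \frac{1}{8\theta}$ piece of $\eta_\theta$), patched together to be $\mathcal{C}^1$ across the width-$\gamma d/c$ interface.

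The key steps, in order: (i) First reduce to the ordered sector $x_1 \ge x_2 \ge \dots \ge x_d$ using the symmetry of (\ref{eq:VIR}) under permutations, and reduce to a neighborhood of the diagonal by noting that the analysis near $\{x_1 \approx x_2\}$ with $x_1$ large is the governing case. (ii) Write down $\psi$ explicitly: roughly, in the region $x_1 \ge 0$, replace $\max\{x_1,x_2\}$ by a smoothed max whenever $x_1 - x_2 < \delta$ (with $\delta = \Theta(1/c)$), and similarly smooth the $\max\{\cdot,0\}$ corner; iterate/superpose these one-variable smoothings so the result stays $\mathcal{C}^1$ and dominates $g$. (iii) Check the two inequalities defining ``subsolution'': $\psi \ge g$ everywhere (clear from the construction, since each smoothing only raises the function), and $-\frac12\Delta\psi + c \le 0$, i.e.\ $\Delta\psi \ge 2c$, in the viscosity sense. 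The Laplacian of each quadratic bump contributes $\approx 2c$ per smoothed corner, and in the worst case a point can lie in the overlap of up to $\sim d$ such bumps (one for each pair/axis it is close to), so $\Delta\psi$ could be as large as $\sim 2cd$, which is $\ge 2c$ — this is exactly where the factor $d$ enters and why the tube width must scale like $d/c$: we need the bumps to have finished (returned to the linear regime $g$) before accumulating too much curvature, while the quadratic coefficient stays $c$. (iv) Confirm $\psi$ has linear growth at infinity so that Lemma \ref{comparison} applies, and that $\psi = g$ precisely on $N(\gamma d/c)$ for a suitable absolute constant $\gamma$.

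The main obstacle I anticipate is step (iii): ensuring the patched function is genuinely a (viscosity) subsolution \emph{globally}, i.e.\ controlling $\Delta\psi$ uniformly in the regions where several smoothing bumps overlap, and simultaneously keeping the $\mathcal{C}^1$ matching across all the interfaces when more than two coordinates are mutually close. The bookkeeping of ``how many bumps can overlap at once'' is what produces the linear-in-$d$ bound, and getting the constant $\gamma$ to be genuinely independent of $d$ and $c$ requires the quadratic coefficient of every bump to be exactly $c$ (not $c/d$), so that each bump individually satisfies $-\frac12\Delta(\text{bump}) + c \le 0$ with room to spare, and the sub-solution property survives superposition at corners. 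An alternative, possibly cleaner, route for step (iii) is to avoid global superposition: work sector by sector in the neighborhood of each diagonal face, prove $u_d = g$ on that face's complement-of-tube piece by a localized comparison on a half-space (as in Lemma \ref{lm:LU}), and then take the union over the $\binom{d}{2}+d$ faces — the linear-in-$d$ dependence then comes from the fact that a point in $N(\gamma d/c)$ is far from \emph{all} faces simultaneously only once the per-face distance threshold is $\Theta(1/c)$ and there are $\Theta(d^2)$ faces but the relevant ``worst direction'' count is $\Theta(d)$. I would try the direct global-subsolution construction first, falling back to the face-by-face argument if the overlap estimates become unwieldy.
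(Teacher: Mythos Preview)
There is a genuine gap: you have the direction of the comparison reversed. To show $u_d = g$ on $N(\gamma d/c)$ you must bound $u_d$ from \emph{above} by something that equals $g$ there, and by Lemma~\ref{comparison} that requires a \emph{supersolution}, not a subsolution. From a subsolution $\psi$ the comparison principle only yields $\psi \le u_d$, which gives you back the trivial inequality $g = \psi \le u_d$ on $N(\gamma d/c)$ and says nothing about the free boundary. Your step (iii) in fact mixes the two notions: $\psi \ge g$ is a supersolution requirement, while $\Delta\psi \ge 2c$ is the subsolution inequality. For a supersolution you need $\Delta\psi \le 2c$, and your own overlap count shows that with quadratic coefficient $c$ per bump, $d$ overlapping bumps produce $\Delta\psi \sim 2cd$, which destroys exactly the inequality you need. (If you instead took each bump with coefficient $c/d$ so that the total Laplacian stays $\le 2c$, the bump width becomes $\sim d/c$ and the picture is closer to workable---but that is not what you wrote, and the $\mathcal{C}^1$ patching across all $\binom{d}{2}$ diagonal hyperplanes simultaneously is still delicate.)

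The paper avoids the piecewise bookkeeping entirely by \emph{mollifying} $g$: set $\Phi_r = \varphi_r * g$ with $\varphi_r$ the standard radial mollifier of support radius $1/r$. Symmetry gives $\varphi_r * x_i = x_i$, hence $\Phi_r \ge g$; and writing $\Delta\Phi_r = \nabla\varphi_r * \nabla g$ with $|\nabla g|\le 1$ yields $|\Delta\Phi_r| \le \int|\nabla\varphi_r| \le C(d-1)r$, where the last bound is a short calculus lemma on the mollifier. Choosing $r = c/(\gamma d)$ makes $\Delta\Phi_r \le 2c$, so $\Phi_r$ is a supersolution and $u_d \le \Phi_r$. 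Finally, on $N(\gamma d/c)$ the function $g$ is linear on every ball of radius $1/r$, so $\Phi_r = g$ there, forcing $u_d = g$. The linear-in-$d$ factor comes not from counting overlapping tubes but from the $L^1$ norm of $\nabla\varphi_r$ in $\mathbb{R}^d$.
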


Note that in the case of parallel search here, for fixed $c,$ this result does not yield that the distance between the free boundary and $\{x_1=\ldots=x_d\}$ is bounded when $d \rightarrow \infty.$ We show that this is indeed the case when we investigate a ``lower bound'' of the free boundary in Proposition \ref{pr6}. In contrast, in the case of sequential search with cost $c'$, when $d \rightarrow \infty,$ that distance converges to $\frac{1}{\sqrt{2}c'}$ (see \cite{ke16}, p. 3591). However, if we set $c'=c/d$ for a fixed $c,$ then as $d \rightarrow \infty$, we would get $c' \rightarrow 0$, and correspondingly that distance for sequential search becomes unbounded. On the other hand, if we fix $c'$ and let $c=dc'$ grow linearly with $d$, then by Proposition \ref{thm ddim up}, we would have the distance between the free boundary and $\{x_1=\ldots=x_d\}$ for parallel search to be bounded.

 This can also be seen, by a similar argument used above, that as we can replicate parallel search by alternating among alternatives in sequential search, it must be that the ``search region'' (i.e., $\{u>g\}$) in the case of parallel search with cost $c$ is a subset of that in the case of sequential search with cost $c'=c/d$. As the distance between the free boundary and $\{x_1=\ldots=x_d\}$ is bounded for sequential search for a fixed $c'$, it must be that it is also bounded for parallel search for $c=dc'$. Note also that even though the free boundary is unbounded when $d \rightarrow \infty$ for fixed $c,$ the search process ends in finite time with probability one as the state moves, over time, away from $x_1 = ... = x_d.$ The question  of whether the distance for parallel search increases in $d$ for fixed $c'=c/d$ remains open.

Next we study the ``lower bound'' of the free boundary. Let us consider the following auxiliary problems: for each $d\geq 1$ consider
\begin{equation}
\label{eq:auxi}
\min \left\{ -\frac{1}{2} \Delta w_{d} + c, w_{d} - \rho \right\} = 0  \mbox{ in } \mathbb{R}^d,
\end{equation}
where $\rho=\max\{x_1,x_2,...,x_d\}$. The free boundary of $w_d$, ($\Gamma(w_d)$) is defined as the boundary of the set $\{w_d=\rho\}$.

 When $d=1,2$, by direct computation, we have that,
\[w_1=\psi_{c}, \mbox{ and } w_2=\eta_c,\]
where 
\begin{equation*}
\psi_c(x)=
\left\{ \begin{array}{lll}
 \displaystyle 0 & \mbox{for } x \le -\frac{1}{4\theta} \\
 \displaystyle c \left(x+\frac{1}{4c}\right)^2 & \mbox{for } x \in (-\frac{1}{4\theta}, \frac{1}{4\theta})\\
 \displaystyle x & \mbox{for } x \geq \frac{1}{4\theta},
\end{array}\right.
\end{equation*}
and $\eta_c$ is given by (\ref{eq:gc}).
Since $\rho\leq g$, $w_d$ is a subsolution to the original PDE (\ref{eq:VIR}) and by comparison $u(=u_d)\geq w_d$. We will show that $w_d$ provides the full information of the behavior of $u$ near $x_1=...=x_d\to\infty$.

\medskip

Let us introduce some notation. We write the positive $x_1,...,x_d$ directions as $e_1,...,e_d$ respectively and
\begin{equation}
    \label{notation t}
    \tau_d:=\frac{\sum_{i= 1}^d e_i}{\sqrt{d}},\, H_{\tau_d}:=\{v|\,v\cdot\tau_d=0\},\text{ and }t=\frac{\sum_{j=1}^{d} x_j}{\sqrt{d}}.
\end{equation}
The following lemma shows that we can reduce the study of $w_d$ to $H_{\tau_d}$, where the proof is provided in Appendix.

\begin{lemma}\label{lem proj}
The expression $w_d-\sum_{j=1}^d x_j/d$ is a constant function in the $\tau_d$ direction. The free boundary of $w_d$ is the surface of one infinitely long columnar with $\tau_d$ as its longitudinal axis.
\end{lemma}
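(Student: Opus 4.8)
The claim has two parts: (i) $w_d - \tfrac{1}{d}\sum_j x_j$ is invariant under translation in the $\tau_d$ direction, and (ii) the free boundary is a cylinder with axis $\mathbb{R}\tau_d$. The plan is to exploit the fact that $\rho(x)=\max_j x_j$ satisfies $\rho(x + \lambda \tau_d) = \rho(x) + \tfrac{\lambda}{\sqrt d}$ for every $\lambda\in\mathbb{R}$, so the obstacle is linear along the $\tau_d$-direction. First I would write $v(x):=w_d(x) - \tfrac{1}{d}\sum_j x_j = w_d(x) - \tfrac{t}{\sqrt d}$ and observe that, since $\Delta\big(\tfrac{1}{d}\sum_j x_j\big)=0$, the PDE (\ref{eq:auxi}) for $w_d$ is equivalent to
\[
\min\Big\{-\tfrac12\Delta v + c,\ v - \tilde\rho\Big\}=0 \quad\text{in }\mathbb{R}^d,\qquad \tilde\rho(x):=\rho(x)-\tfrac{1}{d}\textstyle\sum_j x_j,
\]
and the key point is that $\tilde\rho(x+\lambda\tau_d)=\tilde\rho(x)$, i.e. $\tilde\rho$ is itself constant in the $\tau_d$ direction (it depends only on the projection of $x$ onto $H_{\tau_d}$). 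Hence for any fixed $\lambda$, the function $x\mapsto v(x+\lambda\tau_d)$ solves exactly the same variational inequality with the same obstacle.

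The second step is a uniqueness argument. I would invoke the comparison principle (Lemma \ref{comparison}): both $v(\cdot)$ and $v(\cdot+\lambda\tau_d)$ are viscosity solutions of the shifted VI on $\mathbb{R}^d$ with at most linear growth (this follows from the linear growth of $w_d$, which is sandwiched between $\rho$ and the upper bound $\sum_i|x_i|+C$ as in Lemma \ref{lem:VI+bd}), so condition (\ref{cond: less2g}) holds with $\Omega=\mathbb{R}^d$ (hence $\partial\Omega=\emptyset$). Applying the comparison principle in both directions gives $v(x+\lambda\tau_d)\ge v(x)$ and $v(x)\ge v(x+\lambda\tau_d)$, hence equality for all $x,\lambda$. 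This proves part (i): $v=w_d-\tfrac1d\sum_j x_j$ is constant along $\tau_d$, so it descends to a function on $H_{\tau_d}$.

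Part (ii) is then essentially a corollary. The coincidence set is $\{w_d=\rho\}=\{v=\tilde\rho\}$; since both $v$ and $\tilde\rho$ are constant in the $\tau_d$ direction, the set $\{v=\tilde\rho\}$ is invariant under $x\mapsto x+\lambda\tau_d$ for all $\lambda\in\mathbb{R}$, i.e. it is a full cylinder (``infinitely long columnar'') with generator line $\mathbb{R}\tau_d$. Its boundary $\Gamma(w_d)$ is therefore the lateral surface of that cylinder, with $\tau_d$ as longitudinal axis, as claimed. One should also note that the cross-section is nonempty and bounded away from being all of $H_{\tau_d}$: $\{v=\tilde\rho\}$ is nonempty because $\rho\le g$ implies (via the one-dimensional computations $w_1=\psi_c$, $w_2=\eta_c$ and more generally comparison with $\rho$ far from the diagonal) that $w_d=\rho$ somewhere, while $v>\tilde\rho$ near the diagonal where the obstacle is strictly subharmonic.

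\textbf{Main obstacle.} The only delicate point is justifying the use of the comparison principle for the \emph{shifted} obstacle problem on all of $\mathbb{R}^d$: one must check that Lemma \ref{comparison} applies verbatim to (\ref{eq:auxi}) (with obstacle $\rho$ in place of $g$) and its translate, which amounts to confirming the linear growth bound on $w_d$ and that $\rho$, like $g$, is Lipschitz — both routine. Everything else is bookkeeping about which quantities are constant along $\tau_d$.
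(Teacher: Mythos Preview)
Your proposal is correct and follows essentially the same approach as the paper's proof: subtract the linear part $\tfrac{1}{d}\sum_j x_j = \tfrac{t}{\sqrt d}$, observe that the resulting obstacle $\tilde\rho$ is $\tau_d$-translation invariant, and then use uniqueness (the comparison principle) to conclude that the shifted solution coincides with the original, forcing the free boundary to be a cylinder over its cross-section in $H_{\tau_d}$. You are somewhat more explicit than the paper about checking the growth hypotheses needed to invoke Lemma~\ref{comparison} for the obstacle $\rho$ (the paper simply writes ``by uniqueness of solutions''), and your closing remarks about the cross-section being nonempty and proper are extra observations not required by the statement, but the core argument is identical.
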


 In the following lemma, we show that the free boundary of $w_d$ can be arbitrarily close to the one of $u$ if $\sum_{j=1}^{d} x_j$ is large. Since we are only interested in the region near $x_1=...=x_d$, let us define the following open neighborhood:
\[N_0(R):=
\left\{x\in \mathbb{R}^d\,| \, dist(x, \{s\tau_d, s\in\mathbb{R}\})\leq R\right\}.\]

\begin{lemma}\label{lem lowb}
For any $\epsilon\in(0,1)$ and $R\geq 1$, the distance between the free boundaries of $u$ and $w_d$ is bounded by $R\epsilon$ in the set
\[N_0(R)\cap \left\{ \sum_{j=1}^d x_j\geq \frac{d}{c}\sqrt{\frac{\gamma}{\epsilon}}\right\}\]
where $\gamma$ is a universal constant given in Proposition \ref{thm ddim up}.
\end{lemma}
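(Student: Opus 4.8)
The plan is to use a comparison argument analogous to the proof of Lemma \ref{lm:LU}, combined with the geometric reduction afforded by Lemma \ref{lem proj}. Since $\rho \le g$ we already have $u \ge w_d$ everywhere, so the content of the statement is a one-sided estimate: $u$ is not much larger than $w_d$ in the prescribed region, which forces the contact set $\{u=g\}$ to reach almost as far in as the contact set $\{w_d=\rho\}$. First I would work in the rotated coordinates $(t,z)$ with $t=\sum_j x_j/\sqrt d$ along $\tau_d$ and $z\in H_{\tau_d}$; by Lemma \ref{lem proj}, $w_d(t,z)=\frac{1}{\sqrt d}\,t\cdot(\text{something})+W(z)$, i.e. $w_d$ minus the linear part $\sum_j x_j/d$ depends only on $z$, and its free boundary is the cylinder $\{z\in\partial K\}$ for a fixed convex body $K\subset H_{\tau_d}$. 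The goal becomes: show that on $\{t\ge \frac{d}{c}\sqrt{\gamma/\epsilon}\}\cap N_0(R)$ the contact set of $u$ contains $\{z\notin (1+\epsilon)K\}$ (so the free boundaries are within $R\epsilon$ of each other inside the slab of radius $R$).

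The key step is to build an explicit supersolution that dominates $u$. Following the template of Lemma \ref{lm:LU}, I would take
\[
\varphi_\epsilon(t,z) := \frac{A}{c}\,h(\alpha t) + w_{d,\epsilon}(z) + \frac{\sum_j x_j}{d},
\]
where $h(r)=\max\{1-r,0\}^2$, $\alpha$ is a small constant $\sim c\sqrt{\epsilon/d}$ chosen so that the Laplacian contribution $\frac{A\alpha^2}{c}$ of the $h$-term is $\le c\epsilon$, $w_{d,\epsilon}$ is a dilated/relaxed version of the transverse profile of $w_d$ (the analogue of $\eta_{c-\epsilon}$: it vanishes the Laplacian up to a deficit $2(c-\epsilon)$ on the enlarged set and equals the obstacle outside it), and $A\sim 1/(4)$ is the universal constant from Proposition \ref{thm ddim up} governing the boundary data. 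One checks: (i) $\varphi_\epsilon \ge g$ everywhere, since $w_{d,\epsilon}+\sum x_j/d \ge \rho \ge g$ up to the nonnegative $h$-term when all $x_i\ge 0$, and using Proposition \ref{thm ddim up} / the linear growth bound \eqref{eq:boundmgle0} to handle the complementary region; (ii) $-\tfrac12\Delta\varphi_\epsilon + c \ge 0$ in the continuation region, because $\Delta\varphi_\epsilon = \frac{A\alpha^2}{c}h''+\Delta w_{d,\epsilon} \le c\epsilon + 2(c-\epsilon) \le 2c$; (iii) on the ``boundary'' $\{t = t_0\}$ with $t_0:=\frac1\alpha\sim\frac{d}{c}\sqrt{1/\epsilon}$ the extra term $\frac{A}{c}h(\alpha t_0)=0$, while for $t\le t_0$ the term $\frac{A}{c}h(\alpha t)\le \frac{A}{c}$ is exactly the cushion needed to absorb the gap $u-(w_d)$, which by \eqref{eq:boundmgle0} is at most $O(1/c)$. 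Then the comparison principle (Lemma \ref{comparison}, both functions having linear growth) gives $u\le\varphi_\epsilon$ on $\{t\ge t_0\}$ — after also checking $u\le\varphi_\epsilon$ on $\{t=t_0\}$, which is where Proposition \ref{thm ddim up} is invoked to control $u$ from above by $w_d+O(1/c)$ uniformly. Finally, on the target region $\{t\ge t_0\}$ we have $h(\alpha t)=0$ so $u\le w_{d,\epsilon}+\sum x_j/d$; since the latter equals the obstacle on $\{z\notin(1+O(\epsilon))K\}$, we conclude $u=g$ there, i.e. the free boundary of $u$ lies within $O(\epsilon)$-dilation of $\partial K$, which intersected with $N_0(R)$ gives Hausdorff distance $\le R\epsilon$ after renaming constants (the constraint $t_0\ge\frac{d}{c}\sqrt{\gamma/\epsilon}$ pins down $\alpha$ and $\gamma$).

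The main obstacle I anticipate is twofold: first, constructing $w_{d,\epsilon}$ — the $\epsilon$-relaxed transverse profile — and verifying the Laplacian bound $\Delta w_{d,\epsilon}\le 2(c-\epsilon)$ uniformly in the dimension $d$; unlike the $d=2$ case where $\eta_{c-\epsilon}$ is an explicit quadratic, here the profile $W(z)$ solving $-\tfrac12\Delta W + c = 0$ inside $K$ with $W=\rho$-restriction on $\partial K$ is only implicit, so the dilation must be done carefully and the constant $\gamma$ from Proposition \ref{thm ddim up} is what makes the ``boundary'' distance $t_0 = \frac{d}{c}\sqrt{\gamma/\epsilon}$ work out — this is precisely where the linear-in-$d$ scaling enters. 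Second, one must confirm $\varphi_\epsilon\ge g$ in the full space $\mathbb{R}^d$ and not just the positive orthant, using the linear upper bound \eqref{eq:boundmgle0} for $u$ and the fact that the $h$-cushion is largest near $t=0$; this is routine but dimension-bookkeeping-heavy. Once these are in place the comparison principle finishes the argument immediately.
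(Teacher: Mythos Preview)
Your approach is essentially the paper's, and your anticipated ``main obstacle'' is not one: the relaxed profile $w_{d,\epsilon}$ need not be built by hand on the transverse slice. The paper simply takes the dilation
\[
w_d^\epsilon(x):=\tfrac{1}{1-\epsilon}\,w_d\bigl((1-\epsilon)x\bigr),
\]
which automatically solves $\min\{-\tfrac12\Delta w_d^\epsilon+(1-\epsilon)c,\;w_d^\epsilon-\rho\}=0$ (note $\rho$ is $1$-homogeneous), and whose free boundary is exactly the $(1-\epsilon)^{-1}$-dilation of $\Gamma(w_d)$; this immediately gives the $R\epsilon$ distance bound inside $N_0(R)$.

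There is one genuine slip in your bookkeeping: the cushion amplitude in front of $h(\alpha t)$ is \emph{not} a universal $A/c$. The bound \eqref{eq:boundmgle0} in dimension $d$ gives $u-g\le O(d/c)$, not $O(1/c)$; concretely, the paper uses the mollifier estimate from Proposition~\ref{thm ddim up} to get $u-g\le r:=\gamma d/c$ on $H_{\tau_d}$, and then takes $\varphi_d^\epsilon=r\,h(\alpha t)+w_d^\epsilon$. The Laplacian constraint then reads $r\alpha^2\le c\epsilon$, i.e.\ $\alpha\le c\sqrt{\epsilon/(\gamma d)}$, and $1/\alpha=\tfrac{1}{c}\sqrt{\gamma d/\epsilon}$ is precisely the threshold $\sum_j x_j\ge \tfrac{d}{c}\sqrt{\gamma/\epsilon}$ in the statement (recall $t=\sum_j x_j/\sqrt d$). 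Finally, the comparison is run only in the half-space $\{t>0\}$ (where $\rho=g$, since $\sum x_j>0$ forces $\max_j x_j>0$) with boundary data on $H_{\tau_d}$, so you do not need to verify $\varphi_\epsilon\ge g$ on all of $\mathbb{R}^d$.
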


We provide the proof to Lemma \ref{lem lowb} in the Appendix. Though more complicated, the idea of the proof follows from the one of Lemma \ref{lm:LU}.

We are interested in the most competitive region $x_1=...=x_d\to\infty$ where $d$ products are close. From Proposition \ref{thm ddim up}, we know that $\Gamma(u)$ can not be too far away from the axis $x_1=...=x_d$. Now we try to answer the question that how close this distance can be. By Lemma \ref{lem lowb}, we can identify $\Gamma(u)$ asymptotically with $\Gamma(w_d)$, and by Lemma \ref{lem proj}, we only need to study $\Gamma(w_d)\cap H_{\tau_d}$.

We make the following definition: for each $d\geq 1$, define $r_d$ to be the smallest number such that there exists $x\in H_{r_d}$ satisfying
\[|x|=r_d \text{ and } w_d(x)= \rho(x).\]
From the definition whenever $x\in H_{r_d}$ and $w_d(x)=\rho(x)=g(x)$, then $|x|\geq r_d$.
For example, when $d=1,2$, by the definition of $\psi_{c}$ and $\eta_c$, we have that $r_1= \frac{1}{4c}$ and $r_2=\frac{1}{2\sqrt{2}c}$.

Before the proposition, we need one technical lemma which compares $w_d$ and $w_{d'}$ for $d \ne d'.$

\begin{lemma}\label{cp wj}
For any $k>j\geq 1$, let $\{i_1,i_2,...,i_k\}$ be a permutation of $\{1,...,k\}$. Consider two solutions
$w_k(x_1,...,x_k)$ and $ w_j(x_{i_1},...,x_{i_j}).$
We can view $w_j$ as a function in $\mathbb{R}^k$ by trivial extension:
\[\tilde{w}_j(x_{i_1},...,x_{i_k}):=w_j(x_{i_1},...,x_{i_j}).\]
Then $w_k\geq \tilde{w}_j$ in $\mathbb{R}^k$.
\end{lemma}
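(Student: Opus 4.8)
The plan is to show that the trivial extension $\tilde w_j$ is a viscosity subsolution of the $k$-dimensional auxiliary equation (\ref{eq:auxi}) and then conclude by the comparison principle, exactly as in the arguments that gave $u\ge w_d$ and in the proof of star-shapedness. Write $\rho_k(x)=\max\{x_1,\dots,x_k\}$, $\rho_j(y)=\max\{y_1,\dots,y_j\}$, and let $\pi:\mathbb{R}^k\to\mathbb{R}^j$ be the coordinate projection $\pi(x)=(x_{i_1},\dots,x_{i_j})$, so that $\tilde w_j=w_j\circ\pi$. Since $w_j$ has linear growth, so does $\tilde w_j$, and since $\mathbb{R}^k$ has empty boundary, once I know $\tilde w_j$ is a subsolution the inequality $w_k\ge\tilde w_j$ will follow from the comparison principle of Lemma \ref{comparison} (whose proof adapts with no change to the Lipschitz obstacle $\rho_k$ in place of $g$).

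To check the subsolution property, I would fix $x_0\in\mathbb{R}^k$ and a test function $\varphi\in\mathcal{C}^2$ touching $\tilde w_j$ from above at $x_0$, and set $y_0:=\pi(x_0)$. First, freeze the $k-j$ coordinates not indexed by $i_1,\dots,i_j$ at their values in $x_0$: define $\phi:\mathbb{R}^j\to\mathbb{R}$ to be $\varphi$ evaluated at the point whose $i_\ell$-th coordinate equals $y_\ell$ for $\ell\le j$ and equals $(x_0)_{i_\ell}$ for $\ell>j$. Then $\phi\in\mathcal{C}^2(\mathbb{R}^j)$, and $\phi\ge w_j$ everywhere because $\varphi\ge\tilde w_j=w_j\circ\pi$ and $\pi$ restricts to a bijection on the frozen affine slice; moreover $\phi(y_0)=\varphi(x_0)=\tilde w_j(x_0)=w_j(y_0)$. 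So $\phi$ touches $w_j$ from above at $y_0$, and since $w_j$ is a subsolution to (\ref{eq:auxi}) in $\mathbb{R}^j$, either $-\frac12\Delta\phi(y_0)+c\le 0$ or $w_j(y_0)-\rho_j(y_0)\le 0$.

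The second alternative is handled directly: $\tilde w_j(x_0)=w_j(y_0)\le\rho_j(y_0)=\max_{\ell\le j}(x_0)_{i_\ell}\le\max_{1\le m\le k}(x_0)_m=\rho_k(x_0)$, hence $\tilde w_j(x_0)-\rho_k(x_0)\le 0$. For the first alternative, the crucial observation is that $\tilde w_j$ is constant in each frozen direction: for $\ell>j$ the function $t\mapsto\varphi(x_0+t\,e_{i_\ell})$ lies above $t\mapsto\tilde w_j(x_0+t\,e_{i_\ell})=\tilde w_j(x_0)=\varphi(x_0)$ and attains this value at $t=0$, so $\partial^2_{x_{i_\ell}}\varphi(x_0)\ge 0$. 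Therefore
\[
\Delta\varphi(x_0)=\sum_{\ell\le j}\partial^2_{x_{i_\ell}}\varphi(x_0)+\sum_{\ell>j}\partial^2_{x_{i_\ell}}\varphi(x_0)=\Delta\phi(y_0)+\sum_{\ell>j}\partial^2_{x_{i_\ell}}\varphi(x_0)\ \ge\ \Delta\phi(y_0)\ \ge\ 2c,
\]
so $-\frac12\Delta\varphi(x_0)+c\le 0$. In either case $\min\{-\frac12\Delta\varphi(x_0)+c,\ \tilde w_j(x_0)-\rho_k(x_0)\}\le 0$, which is exactly the subsolution inequality for (\ref{eq:auxi}) in $\mathbb{R}^k$; the comparison principle then yields $w_k\ge\tilde w_j$.

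The only genuinely non-routine step is the subsolution verification, and within it the point that the extra pure second derivatives of the test function in the frozen directions are nonnegative at the contact point — this is what lets the $k$-dimensional Laplacian bound be inherited from the $j$-dimensional one. Everything else is bookkeeping together with the comparison principle already available from Lemma \ref{comparison}. (One could alternatively, and equivalently, invoke symmetry of $w_j$ under permutations of its arguments to reduce to $\{i_1,\dots,i_k\}$ being the identity, but this is not needed for the argument above.)
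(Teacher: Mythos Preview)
Your proof is correct and follows exactly the approach the paper has in mind: the paper's proof is the single sentence ``The lemma is a direct result of the comparison principle,'' and what you have written is precisely the subsolution verification and comparison that makes this sentence rigorous. Your observation that the pure second derivatives of the test function in the frozen directions are nonnegative at the contact point is the key (and standard) point that the paper leaves implicit, and your remark that Lemma~\ref{comparison} adapts verbatim to the obstacle $\rho_k$ is also correct.
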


The lemma is a direct result of the comparison principle.
With a slight abuse of notation, we still write $w_j$ instead of $\tilde{w}_j$.

\medskip

Now we prove the second main result of this section, which provides a lower bound on the free boundary.
\begin{proposition}\label{pr6}
Let $u=u_d$ be the solution to (\ref{eq:VIR}) in dimension $d$ and $r_d$ be given as the above.
In the half plane $\left\{t\geq \frac{1}{c}\sqrt{\frac{\gamma d}{\epsilon}}\right\}$, the distance from the free boundary of $u$ to the ray $\{s\tau_d, \,s\geq 0\}$ lies in the interval \[[\,r_d\,,\,r_d+\frac{\gamma d}{c}\epsilon\,],\]
where $\tau_d$ is defined in (\ref{notation t}), and $\gamma$ is a universal constant given in Proposition \ref{thm ddim up}.
Furthermore, for each $d\geq 3$,
\[\left(d-2-\frac{1}{d}\right)r_d^2\geq (d-2)r^2_{d-1}.\]
In particular, $r_d$ is increasing in $d.$ Furthermore, $r_d \rightarrow \infty$ as $d \rightarrow \infty$.
\end{proposition}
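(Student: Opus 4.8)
The plan is to split the statement into three parts: the bracket estimate on the distance from the free boundary to the ray $\{s\tau_d:s\ge0\}$, the monotone recursion $(d-2-\tfrac1d)r_d^2\ge(d-2)r_{d-1}^2$, and the conclusions that $r_d$ is increasing and $r_d\to\infty$. The first part is essentially bookkeeping built on the earlier lemmas: by Lemma \ref{lem lowb}, in the region $N_0(R)\cap\{\sum_j x_j\ge \frac{d}{c}\sqrt{\gamma/\epsilon}\}$ the free boundaries of $u$ and $w_d$ are within $R\epsilon$ of each other; by Lemma \ref{lem proj}, the free boundary of $w_d$ is a cylinder with axis $\tau_d$, so its distance to the ray is exactly the radial distance of $\Gamma(w_d)\cap H_{\tau_d}$, which by definition is bounded below by $r_d$. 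Choosing $R$ of order $\gamma d/c$ (large enough that the whole competitive cross-section fits inside $N_0(R)$, using Proposition \ref{thm ddim up} for an a priori bound) then converts the $R\epsilon$ closeness into the additive error $\frac{\gamma d}{c}\epsilon$, giving the interval $[r_d,\,r_d+\frac{\gamma d}{c}\epsilon]$ in the stated half-space $\{t\ge\frac1c\sqrt{\gamma d/\epsilon}\}$.

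The heart of the proposition is the recursion. I would work entirely on $\Gamma(w_d)\cap H_{\tau_d}$, i.e. with the reduced obstacle problem for $w_d-\frac1d\sum_j x_j$ restricted to the hyperplane $H_{\tau_d}$ orthogonal to $\tau_d$, whose obstacle is $\rho(x)-\frac1d\sum_j x_j=\max_i x_i-\frac1d\sum_j x_j$. Let $x^\ast\in H_{r_d}$ realize the minimum, so $|x^\ast|=r_d$, $w_d(x^\ast)=\rho(x^\ast)$, and WLOG $x^\ast_1=\max_i x^\ast_i$. The idea is to compare $w_d$ near $x^\ast$ with the lower-dimensional function $w_{d-1}$ in the variables $(x_2,\dots,x_d)$ (trivially extended via Lemma \ref{cp wj}, which gives $w_d\ge w_{d-1}$). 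At a contact point of $w_d$ the constraint $w_d=\rho$ forces the same contact for the extended $w_{d-1}$, so the projection of $x^\ast$ onto the $(d-1)$-dimensional competitive hyperplane must lie outside the corresponding free-boundary radius $r_{d-1}$. The geometric content is then to relate $|x^\ast|$, measured in $H_{\tau_d}\subset\mathbb{R}^d$, to the length of its projection measured in $H_{\tau_{d-1}}\subset\mathbb{R}^{d-1}$: the component $x^\ast_1$ is "used up" by being the maximizer, and a direct computation of how the quadratic forms defining these two perpendicular distances decompose produces exactly the factor $(d-2-\frac1d)$ versus $(d-2)$. Concretely, writing $x^\ast=y+\lambda\tau_d$ with $y\in H_{\tau_d}$ is automatic; one then decomposes $y$ into its first coordinate and the rest, projects the rest onto $H_{\tau_{d-1}}$, and tracks the cross terms — this is where the slightly-less-than-$(d-2)$ coefficient comes from.

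Once the recursion is in hand, the two corollaries are immediate: for $d\ge3$ we have $d-2-\frac1d = \frac{(d-1)^2}{d}-1 >0$ and $\frac{d-2}{d-2-1/d}=\frac{d(d-2)}{(d-1)^2-1}\cdot\frac{1}{1}$... more simply $d-2 > d-2-\frac1d$ for all $d$, so $r_d^2\ge \frac{d-2}{d-2-1/d}r_{d-1}^2 > r_{d-1}^2$, hence $r_d>r_{d-1}$; and iterating, $r_d^2 \ge r_2^2\prod_{k=3}^d \frac{k-2}{k-2-1/k}$, whose logarithm is a sum of terms $\sim \frac{1}{k^2}$... which converges, so this alone does not give $r_d\to\infty$. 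To get divergence I would instead argue directly: combine the lower bound with the fact (from $w_1=\psi_c$, $w_2=\eta_c$, and Lemma \ref{cp wj}) that the contact set of $w_d$ requires every pairwise gap to be sizeable, so that a point on $\Gamma(w_d)\cap H_{\tau_d}$ with all coordinates nearly equal cannot be close to the origin — quantitatively, forcing $|x^\ast_i-x^\ast_j|\gtrsim 1/c$ for the relevant indices while $\sum_i x^\ast_i=0$ forces $|x^\ast|\to\infty$ as $d\to\infty$. The main obstacle I anticipate is the geometric bookkeeping in the recursion step: getting the projection identity precise enough that the coefficient is exactly $d-2-\frac1d$ and not merely $\ge c(d-2)$ requires careful handling of the two different ambient dimensions and the role of the maximizing coordinate, and it is easy to lose the sharp constant there.
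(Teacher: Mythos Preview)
Your plan for the bracket estimate is correct and matches the paper: apply Lemma~\ref{lem lowb} with $R=\gamma d/c$ (which suffices since $r_d\le\gamma d/c$ by Proposition~\ref{thm ddim up}) and read off the interval from the definition of $r_d$.

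The recursion step, however, has a genuine gap. You propose to compare $w_d$ with the trivial extension of $w_{d-1}(x_2,\dots,x_d)$, i.e.\ you drop the \emph{maximizing} coordinate $x^\ast_1$. The claim that contact for $w_d$ forces contact for this $w_{d-1}$ is then false: from $w_{d-1}(x_2^\ast,\dots,x_d^\ast)\le w_d(x^\ast)=x_1^\ast$ together with $w_{d-1}\ge\max\{x_2^\ast,\dots,x_d^\ast\}$ you only get the sandwich $\max\{x_2^\ast,\dots,x_d^\ast\}\le w_{d-1}\le x_1^\ast$, which does not pin $w_{d-1}$ to its own obstacle since $x_1^\ast>\max\{x_2^\ast,\dots,x_d^\ast\}$ generically. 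The paper's remedy is to \emph{keep} $x_1$ and instead drop one of the other coordinates: for each $(d-2)$-subset $\{i_2,\dots,i_{d-1}\}\subset\{2,\dots,d\}$ one compares with $w_{d-1}(x_1,x_{i_2},\dots,x_{i_{d-1}})$, whose obstacle is still $x_1$, so contact \emph{is} forced and one obtains $x_1^2+x_{i_2}^2+\cdots+x_{i_{d-1}}^2\ge r_{d-1}^2$. Summing over all such subsets and invoking the elementary bound $x_1^2\le\frac{d-1}{d}|x|^2$ (Cauchy--Schwarz applied to $\sum_i x_i=0$ with $x_1=\max_i x_i$) yields the coefficient $d-2-\frac1d$ directly. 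No projection onto $H_{\tau_{d-1}}$ is needed; the crude implication ``$|(x_1,x_{i_2},\dots,x_{i_{d-1}})|<r_{d-1}\Rightarrow$ continuation'' already suffices.

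Your diagnosis that the recursion alone does not give $r_d\to\infty$ is correct, but your proposed substitute also fails. The assertion that the contact set of $w_d$ requires every pairwise gap $|x_i-x_j|\gtrsim 1/c$ is not true: comparison with $w_2(x_i,x_j)$ forces a gap only when one of $i,j$ is the index of the maximum (otherwise the obstacle of $w_2$ is $\max\{x_i,x_j\}<x_1$ and the squeeze argument breaks down exactly as above). Indeed, for $a$ large the point $(a,-\tfrac{a}{d-1},\dots,-\tfrac{a}{d-1})\in H_{\tau_d}$ lies in the contact set of $w_d$ and has $d-1$ equal coordinates. The paper instead argues probabilistically: if $r_d$ were bounded, the contact points $x^d$ would have $\max_i x_i^d$ uniformly bounded, but since $\mathbb{E}\max_{i\le d}Z_i\sim\sqrt{2\log d}$ for i.i.d.\ standard normals, for $d$ large the stopping time $\tau=1$ already beats $g(x^d)+c$, contradicting $u(x^d)=g(x^d)$.
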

\begin{proof}
Due to Proposition \ref{thm ddim up}, $r_d\leq \frac{\gamma d}{c}$. We apply Lemma \ref{lem lowb} with $R=\frac{\gamma d}{c}$ and then the first part of the result follows from the definition of $r_d$.

 For the second part, take $k\geq 3$ and $x\in H_{\tau_k}$. Without loss of generality we assume
\[\rho_k(x):=\rho(x_1,...,x_k)=\max\{x_1,...,x_k\}=x_1>0.\]
The inequality holds due to $x\in H_{\tau_k}$. Suppose $w_k(x)=x_1=\rho_k(x)$. Take any $k-2$ different numbers \[\{{i_2},{i_3},...,{i_{k-1}}\}\subset \{2,...,k\}.\]
If\[x_{1}^2+x^2_{i_2}+...+x^2_{i_{k-1}}<r^2_{k-1},\]
by Lemma \ref{cp wj} it follows that
\[w_k(x_1,...,x_k)\geq w_{k-1}(x_{1},x_{i_2},...,x_{i_{k-1}})>\max\{x_{1},x_{i_2},...,x_{i_{k-1}}\}=x_1=\rho_k(x),\]
which cannot happen due to our assumption $w_k=\rho_k$ at $x$. Thus we must have
\[x_{1}^2+x^2_{i_2}+...+x^2_{i_{k-1}}\geq r_{k-1}^2.\]
We can vary the subscripts and add up all the inequalities with respect to different combinations of $\{i_2,...,i_{k-1}\}$. It ends up with
\begin{equation}
    \label{eqn 0.14 1}
    (k-2)x_1^2+(k-3)(\sum_{j=2}^k\, x_j^2)\geq (k-2)r^2_{k-1}.
\end{equation}

 Due to the facts that $\sum_{j=1}^k x_j=0$ and $x_1=\max\{x_1,...,x_k\}$, we can show
\[x_1^2\leq \frac{k-1}{k}|x|^2,\]
and equality can be obtained when,
\[x_1=\sqrt{\frac{k-1}{k}}|x|,\, x_2=x_3=...=x_k=-\frac{1}{\sqrt{{k(k-1)}}}|x|^2.\]
Therefore (\ref{eqn 0.14 1}) leads to
\[\left(k-3+\frac{k-1}{k}\right)|x|^2\geq (k-2)r^2_{k-1}.\]
According to the assumption $w_k=\rho_k$ and the definition of $r_k$,
\[r_k^2\geq|x|^2\geq  \frac{(k-2)}{(k-2-\frac{1}{k})}r^2_{k-1}.\]

To prove that $r_d \rightarrow \infty$ as $d \rightarrow \infty,$ suppose by contradiction that $r_d$ is bounded as $d \rightarrow \infty$.
Then for each $d \ge 2$, there exists $(x^d_1, \ldots , x^d_d) \in \{ u = g\}$ such that $\sup_{1 \le i \le d}x_i^d \le K$ for some
$K$ independent of $d$.
It is well known that for $Z_1, \ldots, Z_d$ i.i.d. $\mathcal{N}(0,1)$,
$\mathbb{E}(\max_{1 \le i \le d} Z_i) \sim \sqrt{2 \log d}$ as $d \rightarrow \infty$.
This implies that given any $c > 0$,
\begin{equation*}
\mathbb{E}\left(\max\left(B_1^{x^d_1}(1), \ldots, B_d^{x^d_d}(1), 0\right) \right)> \max(x^d_1, \ldots, x^d_d, 0) + c.
\end{equation*}
for $d$ sufficiently large.
Therefore,
\begin{equation*}
u(x^d_1, \ldots, x^d_d) \ge \mathbb{E} \left(\max\left(B_1^{x^d_1}(1), \ldots, B_d^{x^d_d}(1), 0\right)\right) - c > g(x^d_1, \ldots, x^d_d).
\end{equation*}
This contradicts the fact that $(x^d_1, \ldots , x^d_d) \in \{ u = g\}$.
This concludes the proof.
\end{proof}


\appendix
\setcounter{lemma}{0}
\setcounter{definition}{0}
\renewcommand{\thelemma}{A\arabic{lemma}}
\renewcommand{\thedefinition}{A\arabic{definition}}

\section*{Definition of Viscosity Solutions}

\begin{definition}
\label{def3}
Let $u$ be a continuous function and $x^0\in\mathbb{R}^d$.
\begin{enumerate}
\item
We say that $-\mathcal{L} u  \le f$ at $x^0$ in the viscosity sense if for any $\varphi\in\mathcal{C}^2$ which touches $u$ at $x^0$ from above, we have $-(\mathcal{L} \varphi)(x^0)  \le f(x^0)$.
We call $u$ a subsolution to (\ref{eq:PDE}) if $-\mathcal{L} u  \le f$ in the viscosity sense at all points where $u - g>0$.

\item
We say that $-\mathcal{L} u  \ge f$ at $x^0$ in the viscosity sense if for any $\varphi\in\mathcal{C}^2$ which touches $u$ at $x^0$ from below, we have $-(\mathcal{L} \varphi)(x^0)  \ge f(x^0)$.
We call $u$ a supersolution to (\ref{eq:PDE}) if $u - g\geq 0$ and $-\mathcal{L} u  \geq f$ in the viscosity sense in $\mathbb{R}^d$.

\item
We call $u$ a viscosity solution to (\ref{eq:PDE}) if and only if $u$ is both a subsolution and a supersolution to (\ref{eq:PDE}).
\end{enumerate}

\end{definition}

\section*{Lemma for General Case}

\begin{lemma}
\label{lem:VI+bd0}
Let $u$ be the value function defined by (\ref{eq:valuegeneral}), with $\mathcal{T}: = \{\tau \mbox{ is a stopping time}: \mathbb{E} \tau < \infty\}$.
Then $u$ is a viscosity solution to
\begin{equation}
\label{eq:VIR0}
 \min \left\{-\mathcal{L} u - f, u - g \right\} = 0.
\end{equation}
Moreover, if there exist $K_1, K_2 > 0$ such that
\begin{equation*}
\sum_{i = 1}^d \sup_{x \in \mathbb{R}^d}|b_i(x)| < K_1 \quad \mbox{and} \quad  \sup_{i,j} \sup_{x \in \mathbb{R}^d}|\sigma_{ij}(x)| < K_2,
\end{equation*}
and there exist $c > K_1$ and $a > 0$ such that
\begin{equation*}
\max_{x \in \mathbb{R}^d} f(x) \le -c \quad \mbox{and} \quad g(x) \le  a \sum_{i = 1}^d |x_i| \,\, \mbox{for all } x \in \mathbb{R}^d,
\end{equation*}
then we have for some $C > 0$,
\begin{equation}
\label{eq:boundmgle0}
g \le u \le a \sum_{i = 1}^d |x_i| + C.
\end{equation}
\end{lemma}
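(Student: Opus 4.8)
The plan is to treat the two assertions in sequence, proving the growth bound \eqref{eq:boundmgle0} first since finiteness of $u$ is what makes the dynamic‑programming argument for the variational inequality \eqref{eq:VIR0} legitimate; a preliminary (standard) step that I take for granted is that $u$ is continuous---indeed locally Lipschitz---by the Lipschitz estimates for the flow of \eqref{eq:SDE} together with the regularity of $f$ and $g$.

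\emph{Growth bound.} The lower bound $u\ge g$ is immediate: $\tau\equiv 0\in\mathcal{T}$ and $J_x(0)=g(x)$. For the upper bound, fix $\tau\in\mathcal{T}$ and set $T:=\mathbb{E}\tau<\infty$. Since $f\le -c$, $\mathbb{E}\int_0^\tau f(X^x_s)\,ds\le -cT$. Write $X^x_i(t)=x_i+\int_0^t b_i(X^x_s)\,ds+M_i(t)$ with $M_i(t):=\sum_j\int_0^t\sigma_{ij}(X^x_s)\,dB_j(s)$; the drift is controlled pathwise by $\int_0^\tau\sum_i|b_i(X^x_s)|\,ds\le K_1\tau$, and $\langle M_i\rangle_\tau=\int_0^\tau\sum_j\sigma_{ij}^2(X^x_s)\,ds\le dK_2^2\tau$, so $M_i(\cdot\wedge\tau)$ is an $L^2$‑bounded martingale with $\mathbb{E}M_i(\tau)^2=\mathbb{E}\langle M_i\rangle_\tau\le dK_2^2T$ and hence $\mathbb{E}|M_i(\tau)|\le K_2\sqrt{dT}$. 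Using $g(X^x_\tau)\le a\sum_i|X^x_i(\tau)|$ and summing,
\[ J_x(\tau)\ \le\ a\sum_{i=1}^d|x_i|+(aK_1-c)\,T+aK_2\,d^{3/2}\sqrt{T}. \]
The coefficient of $T$ is negative (here the hypothesis $c>K_1$ is used; in the application $a=1$), so the right‑hand side is a downward parabola in $\sqrt{T}$ and is bounded uniformly in $\tau$ by $a\sum_i|x_i|+C$ with $C:=a^2K_2^2d^3/\bigl(4(c-aK_1)\bigr)$; taking the supremum over $\mathcal{T}$ gives \eqref{eq:boundmgle0}. The same estimate, applied to $\epsilon$‑optimal stopping times, also shows that their expectations are locally bounded in the starting point---which is exactly what forces the concatenated stopping times used below to stay in $\mathcal{T}$.

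\emph{Dynamic programming and the viscosity property.} I would then establish the dynamic programming principle: for every stopping time $\theta$,
\[ u(x)=\sup_{\tau\in\mathcal{T}}\mathbb{E}\Bigl[\int_0^{\tau\wedge\theta}f(X^x_s)\,ds+\mathbf{1}_{\{\tau<\theta\}}g(X^x_\tau)+\mathbf{1}_{\{\tau\ge\theta\}}u(X^x_\theta)\Bigr]. \]
The ``$\le$'' half is elementary---condition on $\mathcal{F}_\theta$ and use the strong Markov property of \eqref{eq:SDE}, noting that $\tau-\theta$ on $\{\tau\ge\theta\}$ is a stopping time from $X^x_\theta$ lying in $\mathcal{T}$; the ``$\ge$'' half is obtained by pasting $\epsilon$‑optimal stopping times after $\theta$ and letting $\epsilon\downarrow 0$. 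Granted the DPP, take $\theta=\tau_r$, the first exit time of a small ball $\mathcal{B}_r(x^0)$ (capped at a small deterministic time if $\sigma$ is degenerate, so $\mathbb{E}\theta<\infty$), to obtain the two inequalities of Definition~\ref{def3}. For the supersolution: if $\varphi\in\mathcal{C}^2$ touches $u$ at $x^0$ from below, the admissible policy ``continue to $\tau_r$, then behave optimally'' gives $u(x^0)\ge\mathbb{E}\!\left[\int_0^{\tau_r}f\,ds+\varphi(X^x_{\tau_r})\right]$; applying Dynkin's formula to $\varphi$, dividing by $\mathbb{E}\tau_r>0$, and letting $r\to0$ yields $-\mathcal{L}\varphi(x^0)\ge f(x^0)$, which with $u\ge g$ is the supersolution property. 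For the subsolution: suppose $u(x^0)>g(x^0)$ and, for contradiction, that some $\varphi\in\mathcal{C}^2$ touching $u$ at $x^0$ strictly from above satisfies $-\mathcal{L}\varphi(x^0)>f(x^0)$; choosing $r$ small so that $-\mathcal{L}\varphi-f\ge\eta>0$ and $u\ge g+\beta$ on $\overline{\mathcal{B}_r(x^0)}$ and $\varphi\ge u+\zeta$ on $\partial\mathcal{B}_r(x^0)$, one bounds every term of the DPP by $\varphi$, applies Dynkin's formula, and gets $u(x^0)\le\varphi(x^0)-\min(\beta,\zeta)-\eta\,\mathbb{E}[\tau\wedge\tau_r]<u(x^0)$ for every $\tau$, contradicting the DPP; hence $-\mathcal{L}\varphi(x^0)\le f(x^0)$ at every point where $u>g$. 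Thus $u$ is a viscosity solution of \eqref{eq:VIR0}.

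\emph{Main obstacle.} The It\^{o}--Dynkin estimates, the parabola maximization, and the continuity of $u$ are routine. The genuinely delicate point is the ``$\ge$'' half of the DPP, which requires a measurable selection of $\epsilon$‑optimal stopping times (or a weak dynamic‑programming argument in the spirit of Bouchard and Touzi) together with the check that pasted stopping times remain in $\mathcal{T}$---the latter resting precisely on the uniform time estimate extracted from the growth bound. For these standard facts I would cite \cite{KS} and the optimal‑control literature (e.g.\ Fleming--Soner, Touzi) rather than reproduce the selection argument.
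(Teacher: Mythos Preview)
Your argument is correct and matches the paper's: the lower bound from $\tau=0$, the upper bound by splitting $X^x(\tau)$ into drift and martingale parts and maximizing the resulting quadratic in $\sqrt{\mathbb{E}\tau}$, and the viscosity property via dynamic programming. The paper bounds the martingale increments with the Burkholder--Davis--Gundy inequality where you use the It\^o isometry plus Cauchy--Schwarz, and it disposes of the DPP-to-viscosity step in a single citation rather than the sketch you provide, but these are differences of presentation rather than substance.
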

\begin{proof}
The fact that $u$ is a viscosity solution to (\ref{eq:VIR0}) follows from the dynamic programming principle (\ref{eq:PDE}).
By taking $\tau = 0$, we get $u \ge g(x_1, \ldots, x_d)$.
Moreover,
\begin{align*}
g(X^{x_1}_1(\tau), \ldots,X^{x_d}_d(\tau))  & \le a \sum_{i = 1}^d |X_i^{x_i}(\tau)| \\
& \le a \left[ \sum_{i = 1}^d |x_i| +  \sum_{i = 1}^d 
\left| \int_0^{\tau} b_i(X^x(s)) ds + \sum_{j = 1}^d \int_0^{\tau} \sigma_{ij}(X^x(s)) dB_j(s) \right| \right] \\
& \le a \left[ \sum_{i = 1}^d |x_i| +  \sum_{i = 1}^d \int_0^{\tau} |b_i(X^x(s))| ds  + \sum_{i = 1}^d \sum_{j = 1}^d \left|\int_0^\tau \sigma_{ij}(X^x(s)) dB_j(s) \right| \right].
\end{align*}
Note that
\begin{equation*}
\mathbb{E} \left( \sum_{i = 1}^d \int_0^{\tau} |b_i(X^x(s))| ds \right) \le K_1 \mathbb{E}\tau,
\end{equation*}
and there exists $L >0$ such that for any $1 \le i,j \le n$,
\begin{align*}
\mathbb{E} \left|\int_0^\tau \sigma_{ij}(X^x(s)) dB_j(s) \right| &\stackrel{(*)}{\le} L \mathbb{E} \left[ \left( \int_0^{\tau} \sigma_{ij}^2(X^x(s)) ds \right)^{\frac{1}{2}} \right] \\
& \le LK_2 \sqrt{\mathbb{E} \tau}.
\end{align*}
where the inequality $(*)$ is due to the {\em Burkholder-Davis-Gundy inequality} (see  \citealt{RY}, Chapter IV).
Consequently,
\begin{align*}
u  & \le a \left[\sum_{i = 1}^d |x_i| +  \sup_{\tau \in \mathcal{T}} \left\{ (K_1 - c)\mathbb{E}\tau + L d^2 K_2 \sqrt{\mathbb{E} \tau} \right\} \right] \\
& \le a \sum_{i = 1}^d |x_i| +  \frac{a L^2 d^4 K_2^2}{4(c- K_1)},
\end{align*}
which yields (\ref{eq:boundmgle0}).
\end{proof}

\section*{Proof of Lemma \ref{comparison}:}
\vspace{-.2in}
\begin{proof}
Fix $r\geq 1$.
For any $R>r+1$ and $\epsilon\in (0,1/d)$, let
\[u_2^\epsilon:=(1-d\epsilon)u_2+c\epsilon \left(\sum_{i=1}^dx_i^2+\frac{d^3}{4c^2}\right).\]
We claim that $u_2^\epsilon$ is a supersolution to 
\begin{equation}
\label{eq:VI}
\left\{ \begin{array}{lcl}
 \displaystyle \min \left\{-\frac{1}{2} \Delta u + c, u -g \right\} = 0  \mbox{ for }  x\in\mathcal{B}_R, \\
 \displaystyle u =g  \mbox{ for } x\in\partial \mathcal{B}_R.
\end{array}\right.
\end{equation}
Since $\Delta u_2\leq 2c$, we have
\[\Delta u_2^\epsilon=(1-d\epsilon)\Delta u_2+2cd\epsilon\leq 2c.\]
Also because $u_2\geq g$, we get
\begin{align*}
    u_2^\epsilon&=(1-d\epsilon)u_2+\epsilon\sum_{i=1}^d \left(cx^2+\frac{d^2}{4c}\right)\\
    &\geq (1-d\epsilon)\max\{x_1,...,x_d,0\}+d\epsilon \sum_{i=1}^d|x_i|\\
    &\geq \max\{x_1,...,x_d,0\}=g.
\end{align*}

 Next for any small $\epsilon>0$, if we pick $R$ large enough (depending on $h$ and $\epsilon$) and then by the condition (\ref{cond: less2g}),
\[u_1\leq c\epsilon R^2\leq u_2^\epsilon  \text{ on $\partial\mathcal{B}_R\cup \partial\Omega$}. \]
By comparison, $u_2^\epsilon\geq u_1$ in $\mathcal{B}_R\cap\Omega$ and in particular in $\mathcal{B}_r\cap\Omega$.
Consequently,
\[(1-d\epsilon)u_2+c\epsilon \left(r^2+\frac{d^3}{4c^2}\right)\geq u_1.\]
Since we can choose $\epsilon$ to be arbitrarily small and then $r$ to be large, we conclude that
$u_1\leq u_2$ in $\Omega$.
\end{proof}

\section*{Proof of Proposition \ref{prop:asymptotics}:}
\vspace{-.2in}
\begin{proof}
Consider the line $x_1+x_2=\sqrt{2}t$ with fixed $t\geq 1$. By Lemma \ref{lm:LU},
\[               u\left(\frac{t+s}{\sqrt{2}},\frac{t-s}{\sqrt{2}}\right)=\tilde{u}(t,s)\leq {\varphi}_\epsilon(t,s).\]
By definition, using the notation in Lemma \ref{lm:LU}, when
\begin{equation}
    \label{t epsilon}
    \alpha t\geq 1  \text{ i.e. } \epsilon\geq 1/({4ct^2}),
\end{equation}
we have
$\tilde{u}\leq {\varphi}_\epsilon=\eta_{c-\epsilon}.$ This, combining with the fact that $\tilde{u}\geq \eta_{c+\epsilon}$, implies,
\begin{align*}
    \tilde{u}(t,s)=\,&u(x_1,x_2)\geq \max\{x_1,x_2,0\}=g(x_1,x_2)& \text{ if $|x_1-x_2|\geq \frac{1}{2c}$};\\
 \tilde{u}(t,s)=\,&u(x_1,x_2)\geq \frac{c|x_1-x_2|^2}{2}+\frac{1}{8c}+\frac{x_1+x_2}{2}>g(x_1,x_2) & \text{ if $|x_1-x_2|< \frac{1}{2c}$};\\
   \tilde{u}(t,s)=\,&u(x_1,x_2)\leq g(x_1,x_2)& \text{ if $|x_1-x_2|\geq \frac{1}{2(c-\epsilon)}$}.
\end{align*}
Thus,
\begin{align*}
&   u(x_1,x_2)> g(x_1,x_2)   \text{ if $|x_1-x_2|< \frac{1}{2c}$},\\
&    u(x_1,x_2)= g(x_1,x_2)  \text{ if $|x_1-x_2|\geq \frac{1}{2(c-\epsilon)}$}.
\end{align*}
We see that free boundary is between $|x_1-x_2|\in( \frac{1}{2c}, \frac{1}{2(c-\epsilon)})$ once $t=\frac{x_1+x_2}{\sqrt{2}}$ satisfying (\ref{t epsilon}).
Now take $\epsilon=\frac{1}{4ct^2}$ and to have $\epsilon<c$, we require $t\geq \frac{1}{2c}$.
Finally, we conclude that,
\begin{align*}
    d_{FB}(t)&\leq \left(\frac{1}{2(c-\epsilon)}-\frac{1}{2c}\right)/\sqrt{2}\\
    &= \frac{\epsilon}{2\sqrt{2}c(c-\epsilon)}\\
    &=\frac{\epsilon}{2\sqrt{2}c^2}+O\left(\frac{\epsilon^2}{c^3}\right)=\frac{1}{8\sqrt{2}c^3t^2}+O\left(\frac{1}{c^5t^4}\right).
\end{align*}
\end{proof}

\section*{Proof of Proposition \ref{thm ddim up}:}
\vspace{-.2in}
\begin{proof}
We first prove the following technical lemma.
\begin{lemma}\label{est Id}
There exists a universal constant $C$ such that for all $d\geq 2$
\[\int_0^1 \left|\frac{d}{dR}e^{-1/(1-R^2)}\right|R^{d-1}dR\leq C(d-1)\int_0^1 e^{-1/(1-R^2)}R^{d-1}dR.\]
\end{lemma}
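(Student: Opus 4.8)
The plan is to remove the derivative from the left-hand side by an integration by parts, and then to reduce the lemma to a $d$-uniform comparison between two completely elementary integrals.

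First I would set $\phi(R):=e^{-1/(1-R^2)}$ on $[0,1)$, with $\phi(1):=0$. A direct computation gives $\phi'(R)=-\frac{2R}{(1-R^2)^2}\,e^{-1/(1-R^2)}\le 0$ on $[0,1)$, so $\phi$ is nonincreasing and $\bigl|\tfrac{d}{dR}e^{-1/(1-R^2)}\bigr|=-\phi'(R)$. Since $e^{-1/(1-R^2)}$ decays faster than any power of $1-R$ as $R\to1^-$, the function $\phi'$ is integrable on $[0,1]$ and $\phi(R)R^{d-1}\to0$ as $R\to1^-$; also $\phi(0)\,0^{d-1}=0$ because $d\ge2$. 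Integrating by parts (on $[\delta,1-\delta]$ and then letting $\delta\to0^+$) I would obtain
\[
\int_0^1\Bigl|\tfrac{d}{dR}e^{-1/(1-R^2)}\Bigr|R^{d-1}\,dR
=-\int_0^1\phi'(R)R^{d-1}\,dR
=(d-1)\int_0^1 e^{-1/(1-R^2)}R^{d-2}\,dR,
\]
all boundary terms vanishing. Hence it suffices to produce a universal constant $C$ with $\int_0^1 e^{-1/(1-R^2)}R^{d-2}\,dR\le C\int_0^1 e^{-1/(1-R^2)}R^{d-1}\,dR$ for every $d\ge2$.

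To get this I would split the integral at $R=\tfrac12$. On $[\tfrac12,1]$ one trivially has $R^{d-2}\le 2R^{d-1}$, so that part of the left-hand integral is at most $2\int_0^1 e^{-1/(1-R^2)}R^{d-1}\,dR$. On $[0,\tfrac12]$ one has $e^{-4/3}\le e^{-1/(1-R^2)}\le e^{-1}$, which gives simultaneously $\int_0^{1/2}e^{-1/(1-R^2)}R^{d-2}\,dR\le e^{-1}\frac{(1/2)^{d-1}}{d-1}$ and $\int_0^1 e^{-1/(1-R^2)}R^{d-1}\,dR\ge e^{-4/3}\frac{(1/2)^{d}}{d}$. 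Since $\frac{(1/2)^{d-1}}{d-1}=2\cdot\frac{d}{d-1}\cdot\frac{(1/2)^{d}}{d}\le 4\cdot\frac{(1/2)^{d}}{d}$ for all $d\ge2$, adding the two contributions yields the comparison with, e.g., $C=2+4e^{1/3}$, and therefore the lemma.

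There is no real obstacle here; the only place needing a word of care is the justification of the integration by parts at the endpoint $R=1$, which is immediate from the super-polynomial decay of $e^{-1/(1-R^2)}$. The remaining estimates are all elementary and $d$-uniform by construction.
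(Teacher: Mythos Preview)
Your proof is correct. Both you and the paper begin with the same integration by parts, reducing the lemma to the uniform bound $J_{d-2}\le C\,J_{d-1}$ where $J_k:=\int_0^1 e^{-1/(1-R^2)}R^{k}\,dR$. The methods then diverge: the paper observes that by the Cauchy--Schwarz inequality $J_{d-2}^2\le J_{d-1}J_{d-3}$, so the ratio $J_{d-2}/J_{d-1}$ is nonincreasing in $d$ and hence bounded by its value at the smallest admissible $d$. Your argument instead splits at $R=\tfrac12$ and bounds each piece directly with elementary inequalities. The Cauchy--Schwarz route is slicker and yields the sharp constant $C=J_0/J_1$ without computation, while your splitting argument is more hands-on but entirely self-contained and gives an explicit numerical constant. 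Either is perfectly adequate here.
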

\begin{proof}
Denote
\begin{equation}
    \label{J_d}
    J_d:=\int_0^1 e^{-1/(1-R^2)}R^{d}dR.
\end{equation}
Integration by parts gives
\begin{align*}
    \int_0^1 \left|\frac{d}{dR}e^{-1/(1-R^2)}\right|R^{d-1}dR&=\int_0^1 \left(-\frac{d}{dR}e^{-1/(1-R^2)}\right)R^{d-1}dR\\
    &=(d-1)\int_0^1 e^{-1/(1-R^2)}R^{d-2}dR=(d-1)J_{d-2}.
\end{align*}
By the Cauchy-Schwarz inequality, we have
\[J^2_{d-2}\leq J_{d-1}J_{d-3}.\]
Thus,
\begin{multline*}
    \int_0^1 \left|\frac{d}{dR}e^{-1/(1-R^2)}\right|R^{d-1}dR / \left((d-1)\int_0^1 e^{-1/(1-R^2)}R^{d-1}dR\right) \\
    =J_{d-2}/J_{d-1}\leq J_{d-3}/J_{d-2}\leq...\leq J_2/J_1=C.
\end{multline*}
\end{proof}

Now, we prove the main proposition. From previous arguments, we know that $g$ is a subsolution and $u\geq g$.
We are going to construct a supersolution through $g$ and it leads to an estimate of $\Gamma(u)$ from above.

 Consider a symmetric modifier \[\varphi(x)=\mu(|x|)/I_d\] such that
\begin{equation*}
\mu(R)=
\left\{ \begin{array}{lcl}
 \displaystyle e^{-1/(1-R^2)} &\text{ if }R\leq 1, \\
 \displaystyle 0 &\text{ if }R> 1.
\end{array}\right.
\end{equation*}
The numerical constant $ I_{d}$ ensures normalization, i.e.,
\[I_d=\int_{\mathbb{R}^d}\varphi(x)dx=A_dJ_{d-1}\]
where $A_d$ is the surface area of a unit $d$-dimensional ball and $J_{d-1}$ is given in (\ref{J_d}).

 Set $\varphi_r:=r^{d}\varphi(rx)$ and then 
\begin{equation}
\label{est varphi}
\begin{aligned}
    &supp\{\varphi_r\}\subset \{|x|\leq 1/r\},\\
    &\int_{\mathbb{R}^d}| \nabla\varphi_r|dx=r\int_{\mathcal{B}_1}| \nabla\varphi|dx\\
     &=\frac{r}{I_d}\iint_{\mathcal{B}_1}|\nabla\mu(R)|R^{d-1}dRd\omega=\frac{rA_d}{I_d} \int_0^1|\mu'(R)|R^{d-1}dR,
\end{aligned}
\end{equation}
where $\nabla$ is the gradient operator.
According to Lemma \ref{est Id},
\[\int_{\mathbb{R}^d}| \nabla\varphi_r|dx\leq C(d-1)rA_dJ_{d-1}/I_d=C(d-1)r.\]

 We claim that
\[\Phi_r:=\varphi_r *g=\int_{\mathbb{R}^d}\varphi_r(x-y)g(y)dy\] is a supersolution for some $r$ small enough. Let us check the following two conditions,
\begin{equation*}
\Delta \Phi_r\leq 2c,  \text{and} \Phi_r \geq g.
\end{equation*}

 Since (by symmetry) $\varphi_r* x_i=x_i$ and $g=\max\{x_1,...,x_d,0\}$, we have
\[\Phi_r=\varphi_r*g\geq  g.\]
Next we compute
\begin{align*}
    \left|\Delta \Phi_r \right|&= |\nabla_x \int_{\mathbb{R}^d}(\nabla\varphi_r)(x-y)g(y)dy|\\
    &= \left|\nabla_x \int_{\mathbb{R}^d}(\nabla\varphi_r)(y)g(x-y)dy\right|\\
    &\leq  \int_{\mathbb{R}^d}|\nabla\varphi_r|(y)|\nabla g|(x-y)dy.\end{align*}
By the fact $|\nabla g|\leq 1$ and (\ref{est varphi}), we obtain
\begin{align*}
|\Delta\Phi_r|\leq C(d-1)r.
\end{align*}
Thus for some universal $\gamma>1$, we have
$ \left|\Delta \Phi_r \right|\leq 2c$ if $r\leq c/ (\gamma d)$. In all we conclude that with this choice of $r$, $\Phi_r$ is a supersolution and $u\leq \Phi_r$.

 Fix any $x^0\in N(\gamma d/c)$. By definition,
\[\text{$g(x)=x_k$ for some $k$ for all $x=(x_1,...,x_d)\in \mathcal{B}_{\gamma d/c}(x^0)$}\]
and therefore $\Phi_r=g*\phi_r=x_k$. Hence in $ N(\gamma d/c)$, we have $u\leq \Phi_r=g.$
Since $u\geq g$, we conclude that
$u=g \text{ for }x\in N(\gamma d/c)$.
\end{proof}

\section*{Proof of Lemma \ref{lem proj}:}
\vspace{-.2in}
\begin{proof}
Let us sketch the proof below. We are going to use the following cylindrical coordinates: for each $x\in \mathbb{R}^d$, write
\[x=t \tau_d+\sum_{j=1}^{d} s_j e_j,\]
where $\sum_{j=1}^{d} s_j e_j\in H_{\tau_d}$. Then $\omega:=w_d-\frac{t}{\sqrt{d}}$ solves
\begin{equation}
    \label{eqn shift}
    \min \left\{ -\frac{1}{2} \Delta \omega + c, \omega - \left(\rho- \frac{t}{\sqrt{d}}\right)\right\} = 0  \mbox{ in } \mathbb{R}^d.
\end{equation}
Notice that shifts in the $\tau_d$ direction preserve the value of $(\rho- \frac{t}{\sqrt{d}})$. Therefore by uniqueness of solutions to (\ref{eqn shift}), the shifts also preserve $\omega$ i.e. $\omega(x)=\omega(x+s\tau_d)$ for all $s\in\mathbb{R}$.

 Now we consider the free boundary property of $w_d$. Again, for any $x\in \mathbb{R}^d$, write
$x=t\tau_k+y\text{ with }y\in H_{\tau_d}$ and $t(x)=\sum_{j=1}^d x_j/\sqrt{d}$. From the above, $w_d(y)=\rho(y)$ if and only if
\begin{align*}
    w_d(y)+\frac{t}{\sqrt{d}}= \rho(y)+\frac{t}{\sqrt{d}},
\end{align*}
if and only if
\[ w(x)=\max\{y_1,y_2,...,y_d\}+\frac{t}{\sqrt{d}}=\max\{x_1,x_2,...,x_d\}=\rho(x).\]
We used $x=y+t\tau_d$ in the second equality.
Therefore $\Gamma(w_d)$ equals $\{\Gamma(w_d)\cap H_{\tau_d}\}\times \mathbb{R}\tau_d$.
\end{proof}

\section*{Proof of Lemma \ref{lem lowb}:}
\vspace{-.2in}
\begin{proof}
First, we want to give an upper bound of $u-g$ on $H_{\tau_d}$. From the proof of Proposition \ref{thm ddim up}, $\Phi_{1/r}=\varphi_{1/r}*g\geq u$ where $\varphi_{1/r}$ is a modifier supported in $\mathcal{B}_{r}$ with $r=\frac{\gamma d}{c}$.
Because $|\nabla g|\leq 1$ and $\varphi_{1/r}*g (x)$ can be viewed as a weighted average of $g$ in $\mathcal{B}_r(x)$, we have
\[ |\Phi_{1/r}-g|\leq r.\]
In all, we find for $x\in H_{\tau_d}$
\begin{equation}\label{ddim bd}
    u-g\leq \Phi_{1/r}-g\leq r.
\end{equation}

Second, let us construct a supersolution to (\ref{eq:VIR}). For $\epsilon \ll \min\{c,1/d\}$, set \[w^\epsilon_d:= \frac{1}{1-\epsilon}w_d((1-\epsilon)x),\]
which then solves
\begin{equation*}
\min \left( -\frac{1}{2} \Delta w^\epsilon_{d} + (1-\epsilon)c, w^\epsilon_{d} -\rho \right) = 0.
\end{equation*}
Next define a $\mathcal{C}^1$ function
\begin{equation*}
{\varphi}_d^\epsilon(x) := r h(\alpha\, x\cdot\tau_d)+w_d^\epsilon(x),
\end{equation*}
where $h(t)=(\max\{1-t,0\})^2$ and $ \alpha=\alpha(\epsilon)$ are to be determined.

In the third step, we want to show that ${\varphi}_d^\epsilon$ is indeed a supersolution in the half hyperplane $\mathcal{D}:=\left\{\frac{\sum_{j=1}^d x_j}{\sqrt{d}}=:t>0\right\}$.
Since $\rho=g$ in $\mathcal{D}$,
we have ${\varphi}_d^\epsilon\geq g$ in $\mathcal{D}$.
On the boundary $\partial \mathcal{D}=H_{\tau_d}$, it follows from (\ref{ddim bd}) that
\[{\varphi}_d^\epsilon=rh(0)+w_d^\epsilon\geq r+g\geq u.\]
Also by direct computation,
\begin{equation*}
\Delta {\varphi}_d^\epsilon=\Delta (r h)+\Delta w_d^\epsilon \leq 2r\alpha^2+2(1-\epsilon)c.
\end{equation*}
To make ${\varphi}_d^\epsilon$ a subsolution, we only need $r\alpha^2\leq c\epsilon$ which is equivalent to
$\alpha\leq c\sqrt{\frac{\epsilon}{\gamma d}}$.
Finally we can conclude that by comparison, ${\varphi}_d^\epsilon\geq u$ in $\mathcal{D}$.

 When $t\geq \frac{1}{\alpha}=\frac{1}{c}\sqrt{\frac{\gamma d}{\epsilon}}$, we have ${\varphi}_d^\epsilon=w_d^\epsilon\geq u$. Hence we know $w_d^\epsilon\geq u\geq w_d$. Since
\[w_d^\epsilon(x)=\frac{1}{1-\epsilon}w_d((1-\epsilon)x),\] then $w_d(x)=g(x)$ implies $w_d^\epsilon(x)=g(x)$. Therefore the free boundary of $u$ ($\Gamma(u)$) lies between $\Gamma(w_d)$ and $\Gamma(w_d^\epsilon)$ when $t$ is large.
By Lemma \ref{lem proj}, it is sufficient to compare $\Gamma(w_d)$ and $\Gamma(w_d^\epsilon)$ on $H_{\tau_d}$.

We consider a $R$-neighbourhood of the origin in $H_{\tau_d}$ (seeing from Proposition \ref{thm ddim up}, we may pick $R=\frac{\gamma d}{c}$). Again by definition of $w_d^\epsilon$, inside $\mathcal{B}_{R}(t\tau_d)\cap H_{\tau_d}$, the distance between $\Gamma(w_d)$ and $\Gamma(w_d^\epsilon)$ is bounded by
$R\epsilon.$ 
We conclude that the distance between $\Gamma(u)$ and $\Gamma(w_d)$ is bounded by $R\epsilon$ in the set
$N_0(R)\cap \{ t\geq \frac{1}{c}\sqrt{\frac{\gamma d}{\epsilon}}\}$.
\end{proof}

\newpage
\bibliographystyle{plainnat}
\bibliography{unique}

\begin{thebibliography}{29}
\providecommand{\natexlab}[1]{#1}
\providecommand{\url}[1]{\texttt{#1}}
\expandafter\ifx\csname urlstyle\endcsname\relax
  \providecommand{\doi}[1]{doi: #1}\else
  \providecommand{\doi}{doi: \begingroup \urlstyle{rm}\Url}\fi

\bibitem[Assing et~al.(2014)Assing, Jacka, and Ocejo]{AJO14}
Sigurd Assing, Saul Jacka, and Adriana Ocejo.
\newblock Monotonicity of the value function for a two-dimensional optimal
  stopping problem.
\newblock \emph{The Annals of Applied Probability}, 24\penalty0 (4):\penalty0
  1554--1584, 2014.

\bibitem[Bergemann and V\"{a}lim\"{a}ki(1996)]{ber96}
Dirk Bergemann and Juuso V\"{a}lim\"{a}ki.
\newblock Learning and strategic pricing.
\newblock \emph{Econometrica}, 64\penalty0 (5):\penalty0 1125--1149, 1996.

\bibitem[Branco et~al.(2012)Branco, Sun, and Villas-Boas]{bra12}
Fernando Branco, Monic Sun, and J.~Miguel Villas-Boas.
\newblock Optimal search for product information.
\newblock \emph{Management Science}, 58\penalty0 (11):\penalty0 2037--2056,
  2012.

\bibitem[Broadie and Detemple(1997)]{bro97}
M.~Broadie and J.~Detemple.
\newblock The valuation of {A}merican options on multiple assets.
\newblock \emph{Mathematical Finance}, 7\penalty0 (3):\penalty0 241--286, 1997.

\bibitem[Caffarelli(1998)]{caffarelli1998obstacle}
Luis Caffarelli.
\newblock The obstacle problem revisited.
\newblock \emph{Journal of Fourier Analysis and Applications}, 4\penalty0
  (4-5):\penalty0 383--402, 1998.

\bibitem[Che and Mierendorff(2019)]{che16}
Yeon-Koo Che and Konrad Mierendorff.
\newblock Optimal sequential decision with limited attention.
\newblock \emph{American Economic Review}, 108\penalty0 (8):\penalty0
  2993--3029, 2019.

\bibitem[Crandall and Lions(1983)]{crandall1983viscosity}
Michael Crandall and Pierre-Louis Lions.
\newblock Viscosity solutions of hamilton-jacobi equations.
\newblock \emph{Transactions of the American mathematical society},
  277\penalty0 (1):\penalty0 1--42, 1983.

\bibitem[Crandall et~al.(1992)Crandall, Ishii, and Lions]{CIL}
Michael Crandall, Hitoshi Ishii, and Pierre-Louis Lions.
\newblock User's guide to viscosity solutions of second order partial
  differential equations.
\newblock \emph{Bulletin of the American Mathematical Society}, 27\penalty0
  (1):\penalty0 1--67, 1992.

\bibitem[Frehse(1972)]{frehse1972regularity}
Jens Frehse.
\newblock On the regularity of the solution of a second order variational
  inequality.
\newblock \emph{Boll. Un. Mat. Ital.(4)}, 6:\penalty0 312--315, 1972.

\bibitem[Fudenberg et~al.(2018)Fudenberg, Strack, and Strzalecki]{fud15a}
Drew Fudenberg, Philipp Strack, and Tomasz Strzalecki.
\newblock Speed, accuracy, and the optimal timing of choices.
\newblock \emph{American Economic Review}, 108:\penalty0 3651--3684, 2018.

\bibitem[Guo and Zervos(2010)]{GZ10}
Xin Guo and Mihail Zervos.
\newblock $\pi$ options.
\newblock \emph{Stochastic Processes and their Applications}, 120\penalty0
  (7):\penalty0 1033--1059, 2010.

\bibitem[H\'{e}bert and Woodford(2017)]{heb17}
Benjamin H\'{e}bert and Michael Woodford.
\newblock Rational inattention with sequential information sampling.
\newblock Working paper, Stanford University and Columbia University, 2017.

\bibitem[Ishii(1987)]{ishii1987perron}
Hitoshi Ishii.
\newblock Perron's method for {H}amilton-{J}acobi equations.
\newblock \emph{Duke Mathematical Journal}, 55\penalty0 (2):\penalty0 369--384,
  1987.

\bibitem[Ishii(1989)]{ishii1989uniqueness}
Hitoshi Ishii.
\newblock On uniqueness and existence of viscosity solutions of fully nonlinear
  second-order elliptic pde's.
\newblock \emph{Communications on Pure and Applied Mathematics}, 42\penalty0
  (1):\penalty0 15--45, 1989.

\bibitem[Johnson(1987)]{joh87}
H.~Johnson.
\newblock Options on the maximum or the minimum of several assets.
\newblock \emph{Journal of Financial Quantitative Analysis}, 22\penalty0
  (3):\penalty0 277--283, 1987.

\bibitem[Karatzas and Shreve(1991)]{KS}
Ioannis Karatzas and Steven Shreve.
\newblock \emph{Brownian motion and stochastic calculus}, volume 113.
\newblock Springer-Verlag, New York, second edition, 1991.

\bibitem[Ke and Villas-Boas(2019)]{ke19}
T.Tony Ke and J.Miguel Villas-Boas.
\newblock Optimal learning before choice.
\newblock \emph{Journal of Economic Theory}, 180:\penalty0 383--437, 2019.

\bibitem[Ke et~al.(2016)Ke, Shen, and Villas-Boas]{ke16}
T.Tony Ke, Zuo-Jun~Max Shen, and J.Miguel Villas-Boas.
\newblock Search for information on multiple products.
\newblock \emph{Management Science}, 62\penalty0 (12):\penalty0 3576--3603,
  2016.

\bibitem[Keller and Rady(1999)]{rad99}
Godfrey Keller and Sven Rady.
\newblock Optimal experimentation in a changing environment.
\newblock \emph{Review of Economic Studies}, 66\penalty0 (3):\penalty0
  475--507, 1999.

\bibitem[Kinderlehrer and Stampacchia(1980)]{kinderlehrer1980introduction}
David Kinderlehrer and Guido Stampacchia.
\newblock \emph{An introduction to variational inequalities and their
  applications}, volume~31.
\newblock SIAM, 1980.

\bibitem[Moscarini and Smith(2001)]{mos01}
Giuseppe Moscarini and Lones Smith.
\newblock The optimal level of experimentation.
\newblock \emph{Econometrica}, 69\penalty0 (6):\penalty0 1629--1644, 2001.

\bibitem[Nikandrova and Pancs(2018)]{nik18}
Arina Nikandrova and Roman Pancs.
\newblock Dynamic project selection.
\newblock \emph{Theoretical Economics}, 13:\penalty0 115--144, 2018.

\bibitem[Peskir and Shiryaev(2006)]{PSbook}
Goran Peskir and Albert Shiryaev.
\newblock \emph{Optimal stopping and free-boundary problems}.
\newblock Springer, 2006.

\bibitem[Revuz and Yor(1999)]{RY}
Daniel Revuz and Marc Yor.
\newblock \emph{Continuous martingales and {B}rownian motion}, volume 293.
\newblock Springer-Verlag, Berlin, third edition, 1999.

\bibitem[Roberts and Weitzman(1981)]{rob81}
Kevin Roberts and Martin~L. Weitzman.
\newblock Funding criteria for research, development, and exploration projects.
\newblock \emph{Econometrica}, 49\penalty0 (5):\penalty0 1261--1288, 1981.

\bibitem[Rubinstein(1991)]{rub91}
M.~Rubinstein.
\newblock Somewhere over the rainbow.
\newblock \emph{Risk}, 4:\penalty0 63--66, 1991.

\bibitem[Strulovici and Szydlowski(2015)]{str15}
Bruno Strulovici and Martin Szydlowski.
\newblock On the smoothness of value functions and the existence of optimal
  strategies in diffusion models.
\newblock \emph{Journal of Economic Theory}, 159:\penalty0 1016--1055, 2015.

\bibitem[Stulz(1982)]{stu82}
R.M. Stulz.
\newblock Options on the minimum or the maximum of two risky assets: analysis
  and applications.
\newblock \emph{Journal of Financial Economics}, 10\penalty0 (2):\penalty0
  161--185, 1982.

\bibitem[Wald(1945)]{wal45}
Abraham Wald.
\newblock Sequential tests of statistical hypotheses.
\newblock \emph{Annals of Mathematical Statistics}, 16\penalty0 (2):\penalty0
  117--186, 1945.

\end{thebibliography}
\end{document}